\newtheorem{theorem}{Theorem}[section]
\newtheorem{lemma}{Lemma}[section]
\newtheorem{remark}{Remark}[section]
\newtheorem{definition}{Definition}[section]
\newtheorem{corollary}{Corollary}[section]
\newtheorem{proposition}{Proposition}[section]
\numberwithin{equation}{section}
\newcommand{\tr}{{\rm Tr\hskip -0.2em}~}
  \newcommand{\df}[2]{\frac{d#1}{d#2}}
\newcommand{\vertiii}[1]{{\left\vert\kern-0.25ex\left\vert\kern-0.25ex\left\vert #1 
    \right\vert\kern-0.25ex\right\vert\kern-0.25ex\right\vert}}
\begin{document}
\title{Bounds for the reduced relative entropies}
\author{Shigeru Furuichi and Frank Hansen}
\subjclass[2020]{47A63, 47A64, 94A17}
\keywords{Reduced relative entropy, reduced Tsallis relative entropy, variational expression, trace inequality, Golden--Thonpsom inequality}

\begin{abstract}
A lower bound of the reduced relative entropy is given by the use of
a variational expression. 
The reduced Tsallis relative entropy is defined and some results are given. 
In particular, the convexity of the reduced Tsallis relative entropy is obtained. 
Finally, an upper bound of the reduced Tsallis relative entropy is given. 

\end{abstract}

\maketitle

\section{Introduction}\label{sec1}
The second-named author  \cite{H2022} introduced the notion of reduced relative  entropy 
\begin{equation}\label{definition of reduced entropy}
S_H(A\mid B)=\tr[A\log A-H^*AH\log B-A+B],
\end{equation}
where $ H  $ is a contraction, and $ A $ and $ B $ are positive definite matrices. By an extension of Uhlmann's proof of convexity of the relative  entropy it was obtained, that also the reduced relative  entropy is convex. This also follows from the identity
\begin{align}
&S_H(\rho\mid\sigma)=\tr[\rho\log\rho-H^*\rho H\log\sigma-\rho+\sigma]\nonumber\\
&=\tr[HH^*\rho\log\rho-H^*\rho H\log\sigma]+\tr [(I-HH^*)\rho\log\rho]+\tr[-\rho+\sigma]\nonumber\\
&=S_f^H(\rho\parallel\sigma)+\tr [(I-HH^*)\rho\log\rho]+\tr[-\rho+\sigma],\label{quasi_form}
\end{align}
where the quasi-entropy \cite{P1986} 
\[
S_f^X(\rho\parallel\sigma)=\tr\, X^* \mathcal P_f(L_\rho,R_\sigma)X
=\tr[XX^*\rho\log\rho-X^*\rho X\log\sigma],
\]
is defined for an arbitrary matrix $X$.
The above \eqref{quasi_form} is calculated from $ f(t)=t\log t$ and $X:=H$. 
The joint convexity (resp. joint concavity) of $(\rho,\sigma)\to S_f^X(\rho\mid \sigma)$ is known if the function $f$ is operator convex (resp. operator concave). Some properties have been shown in \cite{P1986}. 
For the special case such that $X:=I$, the following interesting result has been known \cite{HM,M2013}:
\[
f:(0,\infty)\to \mathbb{R}: \text{operator convex implies} \quad S_f^I(A\mid B)\le \hat{S}_f(A\mid B),
\]
where the maximal $f$--divergence was defined by $\hat{S}_f(A\mid B):=\tr Bf\bigl(B^{-1/2}AB^{-1/2}\bigr)$ in \cite{HM,PR}.
Throughout this paper, $I$ represents the identity matrix and $I_n$ represents the $n\times n$ identity matrix, unless otherwise noted. There are other ways to introduce the reduced relative entropy such as
\[
\tr [A\log A-H^*AH\log B]\quad {\rm or}\quad \tr [A\log A-H^*AH\log B- A+ e^{H(\log B)H^*}]
\]
for example. 
In this paper, we study the properties of  \eqref{definition of reduced entropy} for positive definite matrices $A,B$ instead of only density matrices $\rho, \sigma. $  Note that $S_{t\log t}^H(\rho\mid \sigma)\neq S_H(\rho\mid\sigma)$ by \eqref{quasi_form}. 

Earlier the second-named author  \cite{H2015}  gave an interpolation inequality between Golden-Thompson's trace inequality and Jensen's trace inequality
\begin{equation}\label{sec1_eq01}
\tr\left[\exp\left(L+\sum_{j=1}^kH_j^*B_jH_j\right)\right]\le \tr\left[\exp(L)\sum_{j=1}^kH_j^*\exp(B_j)H_j\right],
\end{equation}
valid for self-adjoint $L$, $ B_1,\dots,B_k $ and contractions $H_1,\dots,H_k $ with $ H_1^*H_1+\cdots+H_k^*H_k=I. $

Here we give a variational expression of the  reduced relative  entropy to obtain a lower bound of the  reduced relative entropy. 
A variational expression of the relative  entropy was studied initially in \cite{HP,K1986,P1998}. See also \cite{BPL,BPL2}. 
Using the result in \cite{HP}, we obtain a variational expression for the reduced relative  entropy.

\begin{lemma}\label{theorem01} 
Let $ A,B$ denote $n\times n$ matrices, and let $ H $ be a contraction. For positive definite matrices $X,Y$ we obtain:
\begin{itemize}
\item[(i)] If $A=A^*$, then 
\[
1-\tr\, Y +\log \tr\, e^{A+H\log (Y)H^*}=\max\bigl\{-S_H(X\mid Y)+\tr\, XA\mid \tr\, X=1\bigr\}.
\]
\item[(ii)] If $B=B^*$ and  $\tr\, X=1$, then
\[
S_H(X\mid e^B)=\max\bigl\{\tr\, XA -\log \tr\, e^{A+HBH^*}-1+\tr\, e^{B}\mid  A=A^*\bigr\}.
\]
\end{itemize}
\end{lemma}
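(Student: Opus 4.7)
The plan is to reduce both parts to the classical Gibbs variational principle, which asserts that for any self-adjoint matrix $M$,
\[
\log \tr\, e^M = \max\bigl\{\tr\, XM - \tr\, X\log X \,:\, X > 0,\ \tr\, X = 1\bigr\},
\]
with the maximum attained at $X = e^M/\tr\, e^M$. This is the underlying identity behind the Hiai--Petz variational expression \cite{HP} for the ordinary relative entropy, and once part (i) is in place, part (ii) will follow from it by a duality argument.

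For part (i), I would start from the functional $F(X) := -S_H(X\mid Y) + \tr\, XA$ under the constraint $\tr\, X = 1$. Unfolding the definition \eqref{definition of reduced entropy} and exploiting trace cyclicity in the form $\tr\,(H^* X H \log Y) = \tr\,\bigl(X\cdot H(\log Y)H^*\bigr)$, while absorbing $-\tr\, X = -1$ via the constraint, one obtains
\[
F(X) = \tr\, X\bigl(A + H(\log Y)H^*\bigr) - \tr\, X\log X + 1 - \tr\, Y.
\]
Since $A = A^*$ and $\log Y$ is self-adjoint, the operator $M := A + H(\log Y)H^*$ is self-adjoint, and the Gibbs principle immediately identifies the maximum of $F$ as $\log \tr\, e^M + 1 - \tr\, Y$, which is exactly the expression appearing in (i).

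For part (ii), fix $X$ with $\tr\, X = 1$ and $B = B^*$, and set $Y := e^B$ in (i). The resulting inequality, valid for every self-adjoint $A$,
\[
\log \tr\, e^{A + HBH^*} + 1 - \tr\, e^B \geq -S_H(X\mid e^B) + \tr\, XA,
\]
rearranges to $S_H(X\mid e^B) \geq G(A)$, where $G(A) := \tr\, XA - \log \tr\, e^{A + HBH^*} - 1 + \tr\, e^B$, so only the reverse direction remains. I would locate the maximizer by inverting the Gibbs optimality condition $X = e^{A + HBH^*}/\tr\, e^{A + HBH^*}$ from part (i), which suggests the choice $A_0 := \log X - HBH^*$. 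This $A_0$ is self-adjoint and satisfies $A_0 + HBH^* = \log X$, hence $\tr\, e^{A_0 + HBH^*} = \tr\, X = 1$, and a short computation using $\tr\, X = 1$ confirms $G(A_0) = S_H(X\mid e^B)$, completing the identification of the supremum.

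The only step requiring real insight is guessing the maximizer $A_0$ in part (ii); the rest is routine trace manipulation using cyclicity and the constraint $\tr\, X = 1$. No operator-theoretic machinery heavier than the Gibbs variational principle is needed, and in particular the hypothesis that $H$ is a contraction plays no role in the argument.
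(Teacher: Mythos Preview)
Your proposal is correct and follows the same overall strategy as the paper: rewrite $-S_H(X\mid Y)+\tr XA$ as $\tr X\bigl(A+H(\log Y)H^*\bigr)-\tr X\log X + 1 -\tr Y$ and optimize, then for (ii) take $A_0=\log X-HBH^*$. The execution differs in two minor but genuine ways. For (i) the paper re-derives the Gibbs variational principle from scratch (extending $F$ to the compact set of positive semi-definite unit-trace matrices, showing by a one-parameter perturbation that the maximizer cannot have a kernel, and then solving the first-order condition), whereas you simply invoke the principle as a known fact; your route is shorter but presupposes the result, while the paper's is self-contained. For (ii) the paper argues that $G(A)$ is concave (since by (i) the map $A\mapsto \log\tr e^{A+HBH^*}$ is a pointwise maximum of affine functions) and then checks that the Fr\'echet derivative of $G$ vanishes at $A_0$; you bypass both the concavity and the derivative computation by reading off the inequality $S_H(X\mid e^B)\ge G(A)$ directly from (i) and verifying equality at $A_0$ by substitution. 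Your argument for (ii) is the more economical of the two, and your observation that the contraction hypothesis on $H$ is never used is also correct.
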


\begin{proof} 
In parallel with the reasoning in \cite[Lemma 1.2]{HP} we consider the function
\[
F(X)=\tr\, XA-S_H(X\mid Y)=\tr\, XA-\tr\bigl[X\log X-XH\log(Y)H^*-X+Y\bigr],
\]
defined in positive semi-definite $ X $ with $ \tr\, X=1, $ where we used the cyclicity of the trace and extended the reduced relative  entropy  $ S_H (X\mid Y) $ to positive semi-definite X. This is meaningful since $ Y $ is positive definite and $\lambda\log\lambda\to 0 $ for positive $ \lambda\to 0. $ We also note that $ F $ is a concave function. Since the set of positive semi-definite matrices with unit trace is compact, we obtain that $ F $ attains its maximum in a positive semi-definite matrix $ X_0 $ with unit trace. 
Assume first that $X_0$ is not positive definite. Choose a unit vector $u$ in the kernel of $X_0$ and let $Q$ be the rank one projection onto $u$. For $0<t<1$, set
$X_t:=(1-t)X_0+tQ$ and  compute
\[
\frac{d}{dt}F(X_t)=\tr\, \left[(Q-X_0)\left(A-\log X_t+H\log(Y)H^*\right)\right].
\]
Since $\tr\, Q\log X_t\to -\infty$ and $\tr\, X_0\log X_t\to \tr\, X_0\log X_0$ as $t \searrow 0$
we obtain
 \[
 \lim_{t\searrow 0}\frac{d}{dt}F(X_t)=\infty.
 \] 
 Thus $F(X_t)$ is increasing near $t=0$. This contradicts that $X_0$ is a maximizer of $F,$ so $ X_0 $ is positive definite.
 
Therefore, for any Hermitian $ Z $ with $ \tr\, Z=0, $ we obtain 
\[
\df{}{t} F(X_0+tZ)\left|_{t=0}= \tr\, Z(A-\log X_0+H\log (Y)H^*)\right.=0.
\]
It follows that $ A-\log X_0+H\log (Y)H^* $ is a multiple of the identity. Thus
\[
X_0=\frac{\exp(A+H\log (Y)H^*)}{\tr \exp(A+H\log (Y)H)}
\]
since $ X_0 $ is of unit trace. We write $ X_0=c\cdot \exp(A+H\log (Y)H^*) $ and note that
\[
1=\tr\,X_0=c\cdot\tr\exp\left(A+H\log(Y)H^*\right)
\]
and consequently by taking the logarithm we obtain
\[
\log\tr\exp\left(A+H\log(Y)H^*\right)=-\log c. 
\]
We then insert in
\[
\begin{array}{rl}
F(X_0)&=c \cdot\tr\, A\exp(A+H\log (Y)H^*)\\[2ex]
&-\tr\bigl[c\cdot\exp(A+H\log (Y)H^*) (A+H\log(Y)H^*+\log c)\\[2ex]
&-c\cdot \exp(A+H\log (Y)H^*) H\log(Y)H^*\bigr]+1-\tr\,Y \\[2ex]
&=-c\log c\cdot\tr\exp(A+H\log (Y)H^*)+1-\tr\, Y=-\log c+1-\tr\,Y\\[2ex]
&=\log\tr \exp(A+H\log(Y)H^*) + 1-\tr\, Y,

\end{array}
\] 
and this proves the assertion (i). From (i), the function
\[
g(A)=\log \tr[e^{A+HBH^*}]+1-\tr\,e^B
\]
is convex in Hermitian matrices. 
Thus the function
\[
G(A)=\tr[XA]-\log \tr [e^{A+HBH^*}]-1+\tr e^{B}
\]
is concave in Hermitian matrices.
Let $A_0:=\log X-HBH^*$. Then for every Hermitian $S$ we have
\[
\frac{d}{dt}G(A_{0}+tS)\vert_{t=0}=\tr SX -\frac{\tr e^{A_0+HBH^*}S}{\tr e^{A_0+HBH^*}}=\tr SX-\frac{\tr XS}{\tr X}=0.
\]
From the concavity of $G(A)$ on the Hermitian matrices, this implies that $G(A)$ attains 
the maximum at $A_0$.
Then we obtain
\begin{align*}
& G(A_{0}) =\tr[X(\log X-HBH^*)]-\log \tr[e^{\log X-HBH^*+HBH^*}]-1+\tr e^B\\[0.5ex]
&\qquad \qquad=\tr[X\log X-XHBH^*]-1+\tr\,e^{B}=S_H(X\mid e^B)
\end{align*}
as desired.
\end{proof}

Applying Lemma \ref{theorem01}, we obtain a lower bound of the reduced relative entropy.

\begin{theorem}\label{lower bound of reduced relative entropy}  
Let $X,Y>0$ with $\tr\,X=1$, and let $H$ be a contraction. Then
\[
S_H(X\mid Y)\ge \frac{1}{p}\tr\,\left[H^*XH\log\bigl( Y^{-p/2}X^pY^{-p/2})\right]-\tr [X-Y]-\log \bigl(1+\tr\,[I-HH^*]\bigr)
\]
for $ p>0. $
\end{theorem}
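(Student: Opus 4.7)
The plan is to apply Lemma~\ref{theorem01}(ii) with $B=\log Y$ and the tailor-made Hermitian choice
\[
A \,:=\, \frac{1}{p}\,H\log\bigl(Y^{-p/2}X^pY^{-p/2}\bigr)H^*,
\]
engineered so that $\tr[XA]$ reproduces exactly the first term of the target bound. Writing $C:=Y^{-p/2}X^pY^{-p/2}$ and $M:=\log Y+\frac{1}{p}\log C$, one computes $\tr[XA]=\frac{1}{p}\tr[H^*XH\log C]$ and $A+H(\log Y)H^*=HMH^*$; using $\tr X=1$ to rewrite $-1+\tr Y=-\tr[X-Y]$, Lemma~\ref{theorem01}(ii) then yields
\[
S_H(X\mid Y) \,\ge\, \tfrac{1}{p}\tr[H^*XH\log C] - \tr[X-Y] - \log\tr e^{HMH^*}.
\]
The theorem therefore reduces to the trace estimate $\tr e^{HMH^*}\le 1+\tr[I-HH^*]$.

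For this reduction I would invoke the interpolation inequality \eqref{sec1_eq01} with $L=0$, $k=2$, contractions $H_1:=H^*$ and $H_2:=(I-HH^*)^{1/2}$ (so that $H_1^*H_1+H_2^*H_2=I$), and the choices $B_1:=M$, $B_2:=0$. This gives $\tr e^{HMH^*}\le \tr[H^*H\,e^M]+\tr[I-HH^*]$, and because $0\le H^*H\le I$ we have $\tr[H^*H\,e^M]\le\tr e^M$. So everything comes down to the clean bound $\tr e^M\le 1$.

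The main obstacle is this last step: ordinary Golden--Thompson yields only $\tr e^M\le\tr[C^{1/p}Y]$, and a direct numerical check shows the right-hand side can genuinely exceed $1$ when $p<1$, so a sharper tool is required. The remedy is the complementary Golden--Thompson inequality (Hiai--Petz, Ando--Hiai), which asserts that for self-adjoint $A',B'$ and any $r>0$
\[
\tr e^{A'+B'} \,\le\, \tr\bigl[(e^{rA'/2}\,e^{rB'}\,e^{rA'/2})^{1/r}\bigr].
\]
Applied with $A'=\log Y$, $B'=\frac{1}{p}\log C$, and the decisive choice $r=p$, the right-hand side telescopes, because $Y^{p/2}\,C\,Y^{p/2}=Y^{p/2}\,Y^{-p/2}X^pY^{-p/2}\,Y^{p/2}=X^p$, giving $\tr e^M\le\tr[(X^p)^{1/p}]=\tr X=1$. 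Combining yields $\tr e^{HMH^*}\le 1+\tr[I-HH^*]$, and substituting into the variational bound completes the proof. The only non-routine point is this finale: standard Golden--Thompson falls just short, whereas its complementary version at the specific parameter $r=p$ exactly cancels the $Y^{\pm p/2}$-sandwich defining $C$ and returns $\tr X=1$ on the nose.
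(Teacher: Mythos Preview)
Your proof is correct and follows essentially the same route as the paper: the same choice $A=\frac{1}{p}H\log(Y^{-p/2}X^pY^{-p/2})H^*$ and $B=\log Y$ in Lemma~\ref{theorem01}(ii), the same instance of \eqref{sec1_eq01} with $H_1=H^*$, $H_2=(I-HH^*)^{1/2}$, the same use of $H^*H\le I$, and the same Hiai--Petz complementary Golden--Thompson bound at parameter $r=p$ to collapse $\tr e^M$ to $\tr X=1$. The only differences are cosmetic (you name the intermediate quantities $C$ and $M$ and separate the chain into distinct lemmas, whereas the paper runs the four inequalities in a single display).
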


\begin{proof}
 If we set $k:=2$, $ L:=0,\, B_1:=B,\, B_2:=0$, $H_1:=H^*$ and $H_2:=(I-HH^*)^{1/2}$ in  inequality \eqref{sec1_eq01} we obtain
\begin{equation}\label{ineq_trace_Jenasen}
\tr\,[\exp(HBH^*)]\le \tr\,[H\exp (B)H^*]+\tr\,[I-HH^*]
\end{equation}
for a Hermitian $B$ and a contraction $H$. 
By inserting $A=\dfrac{1}{p}H\log \bigl(Y^{-p/2}X^pY^{-p/2}\bigr)H^*$ and $B=\log Y$  in Lemma \ref{theorem01} (ii),
we obtain
\[
\begin{array}{l}
S_H(X\mid Y)+\tr[X-Y]\\[2ex]
\ge \frac{1}{p}\tr\,H^*XH\log (Y^{-p/2}X^pY^{-p/2})-\log \tr\left[\exp\,\left( H\left(\log \bigl(Y^{-p/2}X^pY^{-p/2}\bigr)^{1/p} +\log  Y\right)H^*\right)\right]\\[2ex]
\ge \frac{1}{p}\tr\,H^*XH\log (Y^{-p/2}X^pY^{-p/2})\\[2ex]
-\log \left(\tr\left[H\exp\,\left(\log \bigl(Y^{-p/2}X^pY^{-p/2}\bigr)^{1/p} +\log  Y\right)H^*\right]+\tr\,[I-HH^*]\right)\\[2ex]
\ge \frac{1}{p}\tr\,H^*XH\log (Y^{-p/2}X^pY^{-p/2})\\[2ex]
-\log \left(\tr\left[\exp\left(\log \bigl(Y^{-p/2}X^pY^{-p/2}\bigr)^{1/p} +\log  Y\right)\right]+\tr\,[I-HH^*]\right)\\[2ex]
\ge \frac{1}{p}\tr\,H^*XH\log (Y^{-p/2}X^pY^{-p/2})\\[2ex]
-\log \left(\tr\left[\left(Y^{p/2}\left(Y^{-p/2}X^pY^{-p/2}\right)Y^{p/2}\right)^{1/p}\right]+\tr\,[I-HH^*]\right)\\[2ex]
=\frac{1}{p}\tr\,H^*XH\log (Y^{-p/2}X^pY^{-p/2})-\log \left(1+\tr\,[I-HH^*]\right).
\end{array}
\]
In the second inequality we used inequality \eqref{ineq_trace_Jenasen}.
In the third inequality we used $H^*H\le I$ with $\exp\,\bigl(\log \bigl(Y^{-p/2}X^pY^{-p/2}\bigr)^{1/p} +\log  Y\bigr)\ge 0$, and in the last inequality we used the inequality in \cite[Theorem 1.1]{HP} stating that
\[
\tr\, e^{S+T} \le \tr \bigl(e^{pT/2}e^{pS}e^{pT/2}\bigr)^{1/p},\quad p>0
\]
for Hermitian matrices $S$ and $T$.
\end{proof}

The inequality given in Theorem~\ref{lower bound of reduced relative entropy} may be written as
\begin{equation}
\frac{1}{p}\tr\, H^*XH\log(Y^{-p/2}X^pY^{-p/2}) -\log \left(1+\tr\,[I-HH^*]\right) \le \tr[X\log X-H^*XH\log Y]
\end{equation}
which recovers the inequality in \cite[Theorem 1.3]{HP} by taking $H=I$ and replacing $Y^{-1}$ with $Y$.
If we in addition take $p=1$ in Theorem~\ref{lower bound of reduced relative entropy}, we obtain a lower bound of the reduced relative entropy
\begin{equation}
S_H(X\mid Y)\ge\tr \left[H^*XH\log \left(Y^{-1/2}XY^{-1/2}\right)\right]-\tr [X-Y]-\log \left(1+\tr\,[I-HH^*]\right).
\end{equation}

In Section \ref{sec3}, we define the reduced Tsallis relative  entropy and give a one-parameter extended variational expression. 
In addition, we obtain convexity for the reduced relative entropy. This result gives a simplified proof of the convexity  for the  reduced relative entropy shown in \cite[Theorem 3.1]{H2022}.
Finally we give an upper bound of the  reduced Tsallis relative entropy. This generalizes the existing result.

\section{Main results}\label{sec3} 

The deformed logarithmic function or the $ q $-logarithm is defined by setting
\[
\log_q x=\dfrac{x^{q-1}-1}{q-1}
\]
for $x>0$ and $q\neq 1$. The deformed exponential function $ \exp_q $ is defined as the inverse function of the deformed logarithmic function $ \log_q x. $ It is always positive and given by
\[
\exp_q(x)=
\left\{\begin{array}{ll}
(x(q-1)+1)^{1/(q-1)}\quad&\text{for $ q>1 $ and $x>-(q-1)^{-1} $}\\[1.5ex]
(x(q-1)+1)^{1/(q-1)}\quad&\text{for $ q<1 $ and $x<-(q-1)^{-1} $}\\[1.5ex]
\exp x \quad&\text{for $ q=1$ and  $x\in\mathbb R. $ }
\end{array}\right.
\]
The $ q $-logarithm and the $ q $-exponential functions converge, respectively, to the logarithmic and the exponential functions for $ q\to 1$. We note that
\begin{equation}\label{derivative_q-exponential}
\frac{d}{dx}\log_q(x)=x^{q-2}\qquad\text{and}\qquad \frac{d}{dx}\exp_q(x)=\exp_q(x)^{2-q}\,.
\end{equation}

\begin{definition}\label{the reduced Tsallis relative entropy with parameter }  
For positive definite matrices $A$ and $B$, we define the reduced Tsallis relative entropy with parameter $ q $ by setting
\[
S_{H,\,q}(A\mid B)=\tr\left[A^{2-q}\log_q (A)-H^*A^{2-q}H \log_q (B) -A+B\right]
\]
for $ q \neq 1. $
\end{definition}
It is plain that
 $\lim\limits_{q\to 1}S_{H,q}(A\mid B)=S_{H}(A\mid B).$ 
The reduced Tsallis relative entropy is not non-negative in general, while we do have 
\[
S_{H,q}(UAU^*\mid UBU^*)=S_{U^*HU,q}(A\mid B)
\]
 for a unitary matrix $U$. We intend to study the convexity (concavity) 
 of  the reduced Tsallis relative entropy with applications.
 
We first review Lieb's concavity theorem \cite{L1973} and Ando's convexity theorem \cite{A1979}.
Lieb's concavity theorem states that if $\alpha, \beta \ge 0$ with $\alpha+\beta\le 1$, then for any matrix $X$ the function
$(A,B)\longrightarrow \tr X^*A^{\alpha}XB^{\beta}$ is jointly concave in tuples $ (A,B) $ of positive definite $ A $ and $ B$. Ando's convexity theorem states that necessary and sufficient conditions for joint convexity of $(A,B)\longrightarrow \tr X^*A^{\alpha}XB^{\beta}$  in tuples $ (A,B) $ of positive definite $ A $ and $ B$ are given by $(i),$ $ (ii) $ or $ (iii), $ where
\begin{itemize}
\item[(i)] $-1\le \alpha,\beta\le 0$,
\item[(ii)] $-1\le \alpha \le 0\quad {\rm and}\quad 1-\alpha \le \beta \le 2$,
\item[(iii)] $-1\le \beta \le 0 \quad{\rm and} \quad 1-\beta \le \alpha \le 2$.
\end{itemize}

\begin{theorem}\label{theorem03} 
Take $q\in[0,3]$ with $ q\ne 1, $ and let $H$ be a contraction. If $0\le q <1$ or $1<q\le 2$, then the function
\[
(A,B)\to S_{H,\,q}(A\mid B)
\]
is convex in tuples $ (A,B) $ of positive definite $ A $ and $ B$.  If $2\le q \le 3$, then the function above is concave in tuples $ (A,B) $ of positive definite $ A $ and $ B$. 
\end{theorem}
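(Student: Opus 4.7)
My plan is to expand the $q$-logarithms explicitly, decompose $S_{H,q}(A\mid B)$ into an affine piece, an $A$-only power-trace, and a single joint Lieb/Ando-type trace, and then apply the cited convexity theorems in each of the three $q$-regimes while carefully tracking the sign of the coefficient $1/(q-1)$.

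Using $\log_q t=(t^{q-1}-1)/(q-1)$ and cyclicity of the trace, one obtains
\[
S_{H,q}(A\mid B)=\frac{1}{q-1}\tr\bigl[A-(I-HH^*)A^{2-q}-H^*A^{2-q}HB^{q-1}\bigr]+\tr[B-A].
\]
The terms $\tr A$, $\tr B$ and $\tr(B-A)$ are affine in $(A,B)$ and may be discarded for the convexity analysis. What remains is
\[
\Phi(A,B):=-\frac{1}{q-1}\tr[H^*A^{2-q}HB^{q-1}]\qquad\text{and}\qquad \Psi(A):=-\frac{1}{q-1}\tr[(I-HH^*)A^{2-q}].
\]
For $\Phi$, set $\alpha:=2-q$ and $\beta:=q-1$, so that $\alpha+\beta=1$ identically. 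When $0\le q<1$, the pair $(\alpha,\beta)\in(1,2]\times[-1,0)$ satisfies $\alpha=1-\beta$, the boundary of Ando's case (iii), so $(A,B)\mapsto \tr[H^*A^\alpha HB^\beta]$ is jointly convex; the factor $-1/(q-1)>0$ preserves this. When $1<q\le 2$ both exponents lie in $[0,1]$ with $\alpha+\beta=1$, so Lieb's theorem yields joint concavity, which the negative factor $-1/(q-1)$ turns into joint convexity of $\Phi$. When $2\le q\le 3$, $(\alpha,\beta)\in[-1,0]\times[1,2]$ with $\beta=1-\alpha$ is Ando's case (ii) on its boundary, giving joint convexity of the trace and, after the sign flip, joint concavity of $\Phi$.

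For $\Psi$, I invoke the standard fact that $t\mapsto t^{2-q}$ on $(0,\infty)$ is operator convex for $2-q\in[-1,0]\cup[1,2]$ and operator concave for $2-q\in[0,1]$. Since $I-HH^*\ge 0$, the functional $A\mapsto\tr[(I-HH^*)A^{2-q}]$ inherits the corresponding convexity or concavity in $A$, and a case check shows that multiplication by $-1/(q-1)$ makes $\Psi$ convex precisely when $\Phi$ is convex and concave precisely when $\Phi$ is concave. Adding $\Phi$, $\Psi$ and the affine remainder then gives convexity of $S_{H,q}$ for $q\in[0,1)\cup(1,2]$ and concavity for $q\in[2,3]$. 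The analytic content is supplied entirely by the Lieb and Ando theorems quoted above; the main obstacle is purely organizational, namely matching each $q$-range to the correct boundary case of Ando (or the interior of Lieb) and tracking the sign of $1/(q-1)$ consistently across regimes.
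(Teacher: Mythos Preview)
Your proposal is correct and follows essentially the same approach as the paper: the same decomposition into an affine part, the term $\frac{1}{q-1}\tr[(HH^*-I)A^{2-q}]$ handled via operator convexity/concavity of $t\mapsto t^{2-q}$, and the joint term $\frac{1}{1-q}\tr[H^*A^{2-q}HB^{q-1}]$ handled via Lieb's concavity theorem or Ando's convexity theorem in the respective $q$-ranges, with the sign of $1/(q-1)$ tracked exactly as you do. If anything, your explicit mention of \emph{operator} convexity of $t^{2-q}$ (needed because of the weight $I-HH^*\ge 0$) is slightly more precise than the paper's phrasing.
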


\begin{proof}  
The reduced Tsallis relative entropy can be written as
\begin{equation}\label{alt_exp_rTRE}
S_{H,q}(A\mid B) =\tr\left[\left(\frac{2-q}{q-1}\right)A+B\right]+\frac{1}{q-1}\tr\left[(HH^*-I)A^{2-q}\right]+\frac{1}{1-q}\tr \left[H^*A^{2-q}HB^{q-1}\right].
\end{equation}
The first term is linear. 
Suppose first that $ q\in [0,1) $ and thus $ 1<2-q\le 2. $
Since $A\to A^{2-q}$ is convex, $ -1\le q-1<0, $ and $HH^*-I\le 0$, the second term is convex.
The third term is also convex since  $\alpha=2-q$ and $\beta=q-1$ satisfy the condition (iii) above in Ando's convexity theorem.

 Suppose now that $q\in(1,2]$. Then we have $ 0\le 2-q< 1 $ and $ 0<q-1\le 1. $ The second term is then convex since $ A\to A^{2-q} $ is concave, and the third term is also convex by Lieb's concavity theorem since $\alpha=2-q$ and $\beta=q-1$ satisfy the condition $\alpha,\beta \ge 0$ and $\alpha+\beta \le 1$.
 
 Finally, we suppose $q\in [2,3]$. The second term is concave since $A\to A^{2-q}$ is matrix convex when $-1\le 2-q\le 0$. The third term is also concave since $\alpha=2-q$ and $\beta=q-1$ satisfy the condition (ii) above in Ando's convexity theorem.
\end{proof}

The following result was proved in \cite[Corollary 2.6]{SH2020}. 

\begin{proposition}\label{proposition_previous_theorem3.2} 

Take $q\neq 1$ and assume $H$ is a contraction. The trace function
\begin{equation}\label{the trace function phi}
\varphi_q(A)=\tr\exp_q\left(L+H^*\log_q (A)H\right)
\end{equation}
is concave in positive definite matrices if $ L\ge 0 $ and $ 1<q\le 2. $  It is convex in positive definite matrices if $ L\ge 0 $ and $2\le q\le 3.$  
\end{proposition}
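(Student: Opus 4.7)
The plan is to unwind the definitions of $\exp_q$ and $\log_q$ so that $\varphi_q$ takes a tractable form, and then recognise the resulting trace function as one to which an Epstein-type convexity/concavity theorem applies.

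\textbf{Reduction.} Writing $p := q-1$ and using $\log_q(A) = (A^p - I)/p$ together with $\exp_q(x) = (px+1)^{1/p}$, a direct calculation gives
\[
p\bigl(L + H^* \log_q(A)\, H\bigr) + I \;=\; K + H^* A^p H, \qquad K := pL + (I - H^*H).
\]
Hence $\varphi_q(A) = \tr\bigl(K + H^* A^p H\bigr)^{1/p}$, and $K \ge 0$ because $L \ge 0$ and $H^*H \le I$ ($H$ being a contraction). The claim reduces to showing that this trace function is concave in $A > 0$ for $p\in(0,1]$ and convex for $p\in[1,2]$.

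\textbf{Main argument.} The scalar function $x \mapsto x^p$ is operator concave for $p\in(0,1]$ by L\"owner--Heinz and operator convex for $p\in[1,2]$ by Ando's theorem, so $A \mapsto H^*A^pH$ inherits the corresponding property. The outer map $X\mapsto X^{1/p}$ is however \emph{not} operator monotone when $1/p>1$, so one cannot simply compose these facts. To bridge this gap, I would use the integral representation
\[
x^p = c_0 + c_1 x + \int_0^\infty \frac{sx}{s+x}\, d\nu_p(s), \qquad p\in(0,1),
\]
(and the dual representation with integrands $x^2/(s+x)$ for $p\in(1,2)$), which exhibit $A^p$ as a positive superposition of operator concave (resp.\ operator convex) building blocks. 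Lieb's concavity theorem and Ando's convexity theorem, recalled just before Theorem~\ref{theorem03}, can then be applied termwise to each ingredient of $K + H^*A^pH$, and the desired concavity/convexity recovered by integrating against $d\nu_p$.

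\textbf{Main obstacle.} The essential difficulty is exactly this mismatch between the operator-level concavity/convexity of $A\mapsto H^*A^pH$ and the merely spectral character of $X\mapsto \tr X^{1/p}$. A cleaner route uses the dilation $\tilde H := \binom{H}{(I-H^*H)^{1/2}}$, which is an isometry, together with $\tilde A := \operatorname{diag}(A, I)$; the identity $\tilde H^* \tilde A^p \tilde H = H^*A^pH + (I-H^*H)$ converts the problem to its isometric form, for which the precise statement is Corollary~2.6 of \cite{SH2020}. Once that result is taken as known, the translations $p\in(0,1]\leftrightarrow q\in(1,2]$ and $p\in[1,2]\leftrightarrow q\in[2,3]$ recover exactly the two regimes appearing in the proposition.
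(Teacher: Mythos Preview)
The paper does not prove this proposition; it simply records it as \cite[Corollary~2.6]{SH2020}. Your reduction $\varphi_q(A)=\tr\bigl(K+H^*A^{q-1}H\bigr)^{1/(q-1)}$ with $K=(q-1)L+I-H^*H\ge 0$ is correct and is exactly the identity the paper writes down for the strengthened proposition that follows. Since your argument ultimately rests on the same citation to \cite{SH2020}, your approach coincides with the paper's.

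One remark on the middle part of your proposal: the integral-representation sketch in your ``Main argument'' does not stand on its own. Expressing $A^p$ as a positive superposition of $sA/(s+A)$ (or $A^2/(s+A)$) makes $K+H^*A^pH$ an integral of operator-concave (respectively operator-convex) maps of $A$, but the outer $\tr(\cdot)^{1/p}$ is not operator-monotone when $1/p>1$, so Lieb--Ando cannot be applied termwise as written; this is precisely the obstacle you name, and it is why the appeal to \cite[Corollary~2.6]{SH2020} (whose proof goes through Epstein's concavity theorem) is doing the real work in both your proposal and the paper. Your dilation step is correct but only converts the contraction $H$ into an isometry while leaving a residual positive shift $pL$; that is still the setting of \cite[Corollary~2.6]{SH2020}, so the dilation does not bypass the citation.
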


By refining the arguments in the reference we may strengthing the result to the statement. 

\begin{proposition}  
Let $ H $ be a contraction and take an Hermitian matrix $ L $ such that 
\[
I-H^*H+(q-1)L\ge 0.
\]
 If $1<q\le 2$, then the trace function $ \varphi_q(A) $ defined in (\ref{the trace function phi}) is concave in positive definite matrices. If $ 2\le q\le 3, $ then it is convex in positive definite matrices.
\end{proposition}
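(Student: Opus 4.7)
My plan is to reduce to the preceding Proposition~2.1 (the $L\ge 0$ case) by absorbing the Hermitian matrix $L$ into an augmented isometric contraction acting on a block-diagonal extension of $A$. Using the identity $\exp_q(X)=((q-1)X+I)^{1/(q-1)}$, the trace function rewrites as
\[
\varphi_q(A)=\tr\bigl(M+H^*A^{q-1}H\bigr)^{1/(q-1)},
\]
where $M:=I-H^*H+(q-1)L\ge 0$ by hypothesis. The key observation is that this positivity of $M$ is exactly the weakening that can be absorbed into one extra block.

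Concretely I set $K:=(I-H^*H)^{1/2}$ and $\tilde H:=\binom{H}{K}$, so that $\tilde H^*\tilde H=H^*H+K^*K=I$ and $\tilde H$ is an isometry, hence a contraction. For a parameter $c>0$ I put $\tilde A:=A\oplus cI$ and
\[
\tilde L:=L+\frac{1-c^{q-1}}{q-1}(I-H^*H).
\]
Using $\log_q(cI)=\log_q(c)\cdot I$, a direct block computation gives the identity $\tilde L+\tilde H^*\log_q(\tilde A)\tilde H=L+H^*\log_q(A)H$, so that $\tr\exp_q(\tilde L+\tilde H^*\log_q(\tilde A)\tilde H)=\varphi_q(A)$. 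Moreover $(q-1)\tilde L=M-c^{q-1}(I-H^*H)$, so the condition $\tilde L\ge 0$ is equivalent to $M\ge c^{q-1}(I-H^*H)$.

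In order to apply Proposition~2.1 I need $\tilde L\ge 0$. When $M$ has zero eigenvalues in directions where $I-H^*H$ is nontrivial, no single $c>0$ makes $\tilde L\ge 0$, so I first regularize by replacing $L$ with $L_\varepsilon:=L+\frac{\varepsilon}{q-1}I$; then $M_\varepsilon:=M+\varepsilon I>0$, and for each $\varepsilon>0$ I can pick $c=c(\varepsilon)>0$ small enough that the corresponding $\tilde L_\varepsilon\ge 0$. Proposition~2.1 then furnishes concavity (for $1<q\le 2$) or convexity (for $2\le q\le 3$) of $\tilde A\mapsto\tr\exp_q(\tilde L_\varepsilon+\tilde H^*\log_q(\tilde A)\tilde H)$ on the positive definite cone. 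Restricting to the affine section $\tilde A=A\oplus cI$ preserves the property as a function of $A$, and passing to the pointwise limit $\varepsilon\to 0^+$ preserves concavity or convexity, yielding the claim.

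The step I expect to be the main obstacle is the bookkeeping in the second paragraph, i.e.\ identifying the correct $\tilde H$, $\tilde A$, and $\tilde L$ so that the augmented expression reproduces $\varphi_q(A)$ while turning the weaker hypothesis $M\ge 0$ precisely into the required positivity of $\tilde L_\varepsilon$ in the small-$c$ limit. Once this correspondence is in hand, the rest is a routine affine-restriction-plus-limit reduction to Proposition~2.1.
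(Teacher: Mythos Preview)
Your argument is correct, but the route differs from the paper's. The paper does not reduce to Proposition~2.1 at all: it simply writes
\[
\varphi_q(A)=\tr\bigl[I-H^*H+(q-1)L+H^*A^{q-1}H\bigr]^{1/(q-1)}
\]
and then observes that the proof of \cite[Corollary 2.6]{SH2020} (which underlies Proposition~2.1) only requires the constant part $M=I-H^*H+(q-1)L$ to be positive semi-definite, not $L$ itself. In other words, the paper reopens the proof of Proposition~2.1 and notices that the hypothesis $L\ge 0$ was never the sharp one.

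Your approach, by contrast, treats Proposition~2.1 as a black box and manufactures a situation to which it applies. The dilation $\tilde H=\binom{H}{K}$ with $K=(I-H^*H)^{1/2}$ is an isometry (hence a contraction, and Proposition~2.1 is already used with non-square contractions in the proof of Corollary~\ref{corollary02}); your block identity $\tilde L+\tilde H^*\log_q(\tilde A)\tilde H=L+H^*\log_q(A)H$ is correct; the computation $(q-1)\tilde L=M-c^{q-1}(I-H^*H)$ is right; and the $\varepsilon$-regularization plus $c\to 0^+$ handles the boundary case $M\ge 0$ cleanly, since $A\mapsto\tr\exp_q(L_\varepsilon+H^*\log_q(A)H)\to\varphi_q(A)$ pointwise and concavity/convexity survives pointwise limits.

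The trade-off: the paper's proof is shorter and exposes why the weakening is natural (the explicit formula makes $M$ the only relevant constant), but it sends the reader back to the reference. Your proof is self-contained modulo Proposition~2.1 and illustrates a reusable dilation technique, at the cost of the extra regularization layer. Both are valid; yours would be worth keeping as an alternative if one wants to avoid re-examining the argument in \cite{SH2020}.
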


\begin{proof}
It follows from an easy calculation that
\[
\varphi_q(A)=\tr\bigl[I-H^*H+(q-1)L+H^*A^{q-1}H\bigr]^{1/(q-1)},
\]
cf. \cite[eq. (3.2)]{H2024}. The remainder of the proof follows as in \cite[Corollary 2.6]{SH2020}. 
\end{proof}

\begin{corollary}\label{cor_LS_concave}  
Let $H$ be a contraction, $L$ and $A$ be positive definite.
 If $q\in(1,2]$, then the function
$
h_{q}(A):=\log_q\varphi_q(A)
$
is concave in positive definite matrices, where $\varphi_q(A)$ is defined in Proposition \ref{proposition_previous_theorem3.2}. If $q\in[2,3]$, then the function
$h_{q}(A)$
is convex in positive definite matrices.
\end{corollary}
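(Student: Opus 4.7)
The plan is to derive Corollary \ref{cor_LS_concave} as a direct composition argument from the preceding Proposition. First I would verify that the hypotheses of that Proposition are met under the assumptions of the corollary. Since $H$ is a contraction we have $I-H^*H\ge 0$, and since $L$ is positive definite and $q-1>0$ throughout the range $q\in(1,3]$, we conclude $I-H^*H+(q-1)L\ge 0$. Hence the Proposition applies: the trace function $\varphi_q$ is concave in $A>0$ when $q\in(1,2]$ and convex in $A>0$ when $q\in[2,3]$. Moreover $\varphi_q(A)>0$ since $\exp_q$ is a positive function, so $h_q(A)=\log_q\varphi_q(A)$ is well defined.

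Next I would analyze the outer scalar function $\log_q:(0,\infty)\to\mathbb R$. From \eqref{derivative_q-exponential} we have $\tfrac{d}{dx}\log_q(x)=x^{q-2}>0$ for every $q$, so $\log_q$ is strictly increasing on $(0,\infty)$. A second differentiation gives $(q-2)x^{q-3}$, which is nonpositive for $q\in(1,2]$ and nonnegative for $q\in[2,3]$. Thus $\log_q$ is concave and increasing on $(0,\infty)$ for $q\in(1,2]$, and convex and increasing on $(0,\infty)$ for $q\in[2,3]$.

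The conclusion then follows from the standard scalar composition principle. For $q\in(1,2]$, given positive definite $A,B$ and $t\in(0,1)$, concavity of $\varphi_q$ together with monotonicity of $\log_q$ yields
\[
\log_q\varphi_q\bigl(tA+(1-t)B\bigr)\ge \log_q\bigl(t\varphi_q(A)+(1-t)\varphi_q(B)\bigr),
\]
and concavity of $\log_q$ then gives
\[
\log_q\bigl(t\varphi_q(A)+(1-t)\varphi_q(B)\bigr)\ge t\log_q\varphi_q(A)+(1-t)\log_q\varphi_q(B),
\]
so $h_q$ is concave. For $q\in[2,3]$ the same template, now applied with the opposite direction of both inequalities (convexity of $\varphi_q$ and convexity of $\log_q$, combined with monotonicity), yields convexity of $h_q$.

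I do not anticipate a serious obstacle; the argument is entirely in the bookkeeping of signs and directions of monotonicity. The one point worth checking is the boundary $q=2$, where $\log_q(x)=x-1$ is affine, $\varphi_2$ is affine in $A$, and therefore $h_2$ is affine, so both the concave and convex conclusions hold consistently at the meeting point of the two regimes.
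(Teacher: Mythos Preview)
Your proposal is correct and follows essentially the same route as the paper: compute the first and second derivatives of $\log_q$ to see that it is increasing and concave on $(0,\infty)$ for $q\in(1,2]$ and increasing and convex for $q\in[2,3]$, then compose with the concavity/convexity of $\varphi_q$ supplied by the preceding proposition. The paper simply records the derivatives and declares that ``the statement follows by standard arguments,'' whereas you have written out those standard composition inequalities and the $q=2$ boundary check explicitly.
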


\begin{proof}  
The first and second derivatives of the function $\log_{q}(x)$ are given by:
\[
\frac{d}{dx}\log_{q}(x)=x^{q-2} \qquad \frac{d^2}{dx^2}\log_{q}(x)=\left(q-2\right)x^{q-3},\quad (x>0)
\]
from which the statement follows by standard arguments.
\end{proof}

\begin{remark}  
By letting $q \to 1$ from above in Proposition \ref{proposition_previous_theorem3.2}, we note that the assumption
 \[
 I-H^*H+(q-1)L\ge 0
 \]
 is automatically satisfied, and we recover concavity of $A\to \tr \exp\left(L+H^*\log (A)H\right)$ for any Hermitian  matrix $L; $
 a result shown in \cite{H2001}. 
 If we take $H:=I$, then the result \cite[Theorem 6]{L1973} is recovered.
 \end{remark}

We may proceed with Proposition \ref{proposition_previous_theorem3.2} to consider the multivariate extension with block matrices in a similar way as in the paper \cite{H2015}.

\begin{corollary}\label{corollary02} 
Let $H_1,\dots,H_k $ be $ m\times n $ matrices such that $H_1^*H_1+\cdots +H_k^*H_k\le I_n $ and
let $L$ be an $n\times n$ positive semi-definite matrix.  If $q\in(1,2]$, then the function
\[
\varphi(A_1,\cdots,A_k)=\tr\bigl[\exp_q(L+H_1^*\log_q (A_1)H_1+\cdots+H_k^* \log_q (A_k)H_k\bigr]
\]
is concave in positive definite $ m\times m $ matrices $A_1,\dots,A_k\,. $
If $q\in [2,3]$, then $\varphi(A_1,\cdots,A_k)$ is convex  in positive definite $ m\times m $ matrices $A_1,\dots,A_k\,. $
\end{corollary}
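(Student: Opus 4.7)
The plan is to reduce the multivariate statement to the strengthened Proposition just proved by packaging the tuple $(A_1,\ldots,A_k)$ into a single block diagonal matrix and the contractions $H_1,\ldots,H_k$ into a single stacked contraction, in the spirit of the argument used in \cite{H2015}. More precisely, I set $A:=A_1\oplus\cdots\oplus A_k$, a positive definite $km\times km$ matrix, and define the $km\times n$ block matrix $\tilde H$ whose $i$-th block row is $H_i$.

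Next I verify two bookkeeping identities. The first, $\tilde H^*\tilde H=\sum_{i=1}^kH_i^*H_i\le I_n$, shows that $\tilde H$ is a contraction. The second uses the fact that $\log_q$ applied via the functional calculus to a block diagonal matrix is again block diagonal with entries $\log_q(A_i)$; a direct block computation then yields
\[
\tilde H^*\log_q(A)\,\tilde H \;=\; \sum_{i=1}^k H_i^*\log_q(A_i)H_i.
\]
Consequently $\varphi(A_1,\ldots,A_k)=\varphi_q(A)$, where $\varphi_q$ denotes the single-variable trace function \eqref{the trace function phi} associated with the contraction $\tilde H$ and the Hermitian matrix $L$.

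Since $L$ is positive semi-definite and $q>1$, the hypothesis $I_n-\tilde H^*\tilde H+(q-1)L\ge 0$ of the preceding strengthened Proposition is automatic, so $A\mapsto\varphi_q(A)$ is concave on positive definite matrices for $q\in(1,2]$ and convex for $q\in[2,3]$. Finally, the map $T:(A_1,\ldots,A_k)\mapsto A_1\oplus\cdots\oplus A_k$ is linear, hence affine, and sends tuples of positive definite $m\times m$ matrices into positive definite block diagonal matrices, so $\varphi=\varphi_q\circ T$ inherits the required joint concavity or joint convexity in the tuple.

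The only step that might look subtle is the block diagonal functional calculus identity for $\log_q$, but this is immediate from the spectral theorem: block diagonal matrices commute blockwise and any scalar function of them is computed blockwise. So there is essentially no hard obstacle here; the corollary is a routine packaging of the preceding Proposition, whose single-variable proof carries all the real content.
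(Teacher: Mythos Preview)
Your proof is correct and follows essentially the same block-matrix reduction to the single-variable Proposition as the paper. The only cosmetic difference is that the paper pads $L$ to a $kn\times kn$ block matrix $\hat L=\mathrm{diag}(L,0,\dots,0)$ and uses a $km\times kn$ block contraction $\hat H$, which produces an extra additive constant $(k-1)n$ after taking $\exp_q$ of the zero blocks; your choice of the tall $km\times n$ matrix $\tilde H$ avoids this constant and is marginally cleaner.
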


\begin{proof} 
Consider the $k\times k$ block matrices:
\[
\hat A=\begin{pmatrix}
     A_1     & 0     & \cdots   & 0\\
     0         & A_2 &             & 0\\
     \vdots &         & \ddots  & \vdots\\
     0        & 0       & \dots    & A_k
     \end{pmatrix},\quad
     \hat L=\begin{pmatrix}
     L     & 0     & \cdots   & 0\\
     0         & 0  &             & 0\\
     \vdots &         & \ddots  & \vdots\\
     0        & 0       & \dots    & 0
     \end{pmatrix},\quad
\hat H=\begin{pmatrix}
     H_1    & 0        & \cdots & 0\\
     H_2    & 0        & \cdots & 0\\
     \vdots & \vdots & \ddots & \vdots\\
     H_k    & 0         & \cdots & 0
     \end{pmatrix}
\]
and calculate
\[
\hat{L} + {\hat{H}^*}{\log _q}\left( \hat{A} \right)\hat{H} =
\begin{pmatrix}
     L + \sum\limits_{i = 1}^k {H_i^*{{\log }_q}\left( {{A_i}} \right){H_i}}     & 0     & \cdots   & 0\\
     0         & 0 &             & 0\\
     \vdots &         & \ddots  & \vdots\\
     0        & 0       & \dots    & 0
     \end{pmatrix},\quad
\]
from which we obtain the identity
\[
\tr \exp_q\left(\hat{L} + {\hat{H}^*}{\log _q}\left( \hat{A} \right)\hat{H}\right)  
=\tr \exp_q\left(  L + \sum\limits_{i = 1}^k {H_i^*{{\log }_q}\left( {{A_i}} \right){H_i}} \right) +(k-1)n
\]
which is concave (resp. convex) for $q\in(1,2]$ (resp. $q\in[2,3]$) by Proposition \ref{proposition_previous_theorem3.2} with $L\ge 0$. This proves the statements in the corollary. 
\end{proof}
Take $q>1$ and $ L, A_1,\cdots,A_k >0$. 
By the definition of the deformed exponential we obtain
\[ 
\varphi(A_1,\cdots,A_k) =\tr \left[I-\sum_{i=1}^kH_i^*H_i+(q-1)L+\sum_{i=1}^kH_i^*A_i^{q-1}H_i\right]^{\frac{1}{q-1}}.
\]
We already considered the case $q\in(1,3]$ in Proposition \ref{proposition_previous_theorem3.2} and Corollaries \ref{cor_LS_concave}, \ref{corollary02}. In the remainder of the paper, we consider the cases $q\in(1,2]$ and $q\in[0,2]\backslash\{1\}.$ 
We introduce the following Jensen type inequality.
\begin{proposition}\label{theorem05} 
Let $H_1,\dots,H_k $ be $ m\times n $ matrices such that $H_1^*H_1+\cdots +H_k^*H_k= I_{n} $  and let $B_j,\,\,(j=1,2,\cdots,k)$ be  $m\times m$ positive definite matrices.  We then obtain the inequality
\begin{equation}\label{ineq_theorem3.2}
\tr\left[\exp_{q}\left(\sum_{j=1}^kH_j^*B_jH_j\right)\right]\le \tr\left[\sum_{j=1}^kH_j^*\exp_{q}(B_j)H_j\right]
\end{equation}
for $ 1\le q\le 2. $
\end{proposition}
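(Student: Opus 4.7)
The plan is to handle the $q=1$ and $1<q\le 2$ cases separately. For $q=1$ the inequality \eqref{ineq_theorem3.2} reduces to \eqref{sec1_eq01} with $L=0$ (the contraction assumption on each $H_j$ being automatic from $\sum_j H_j^*H_j=I_n$), so we may focus on $1<q\le 2$. Setting $s:=1/(q-1)\ge 1$ and $C_j:=I_m+(q-1)B_j>0$, the formula $\exp_q(x)=(1+(q-1)x)^{1/(q-1)}$ together with the identity
\[
I_n+(q-1)\sum_{j=1}^k H_j^*B_jH_j=\sum_{j=1}^k H_j^*\bigl(I_m+(q-1)B_j\bigr)H_j=\sum_{j=1}^k H_j^*C_jH_j
\]
(which crucially uses $\sum_j H_j^*H_j=I_n$) recasts \eqref{ineq_theorem3.2} as the power-function inequality
\[
\tr\Bigl(\sum_{j=1}^k H_j^*C_jH_j\Bigr)^{s}\le \tr\sum_{j=1}^k H_j^*C_j^{s}H_j,\qquad s\ge 1.
\]

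Next I would reduce to a single-isometry statement by imitating the block-matrix construction used in the proof of Corollary \ref{corollary02}. Let $\hat H$ be the $mk\times n$ column block with entries $H_1,\dots,H_k$ stacked vertically, so that $\hat H^*\hat H=\sum_j H_j^*H_j=I_n$, and let $\hat C:=\mathrm{diag}(C_1,\dots,C_k)\ge 0$. Then $\hat H^*\hat C\hat H=\sum_{j} H_j^*C_jH_j$ and $\hat H^*\hat C^{s}\hat H=\sum_{j} H_j^*C_j^{s}H_j$, so it suffices to prove
\[
\tr\bigl(\hat H^*\hat C\hat H\bigr)^{s}\le \tr\hat H^*\hat C^{s}\hat H,\qquad s\ge 1,
\]
for an isometry $\hat H$ and $\hat C\ge 0$.

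Setting $P:=\hat H\hat H^*$, which is a projection in $\mathbb{C}^{mk}$, cyclicity of the trace (together with $f(0)=0$ for $f(t)=t^s$) yields $\tr(\hat H^*\hat C\hat H)^{s}=\tr(P\hat CP)^{s}$ and $\tr\hat H^*\hat C^{s}\hat H=\tr P\hat C^{s}$. The Araki--Lieb--Thirring inequality
\[
\tr\bigl(B^{1/2}AB^{1/2}\bigr)^{s}\le\tr\bigl(B^{s/2}A^{s}B^{s/2}\bigr),\qquad s\ge 1,\; A,B\ge 0,
\]
applied with $A=\hat C$ and $B=P$ (so $B^{s/2}=P$) then gives $\tr(P\hat CP)^{s}\le\tr P\hat C^{s}P=\tr P\hat C^{s}$, finishing the argument.

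The main obstacle is the last step. For $s\in[1,2]$ (equivalently $q\in[3/2,2]$) the function $t\mapsto t^s$ is operator convex, so the Hansen--Pedersen operator Jensen inequality immediately furnishes $(P\hat CP)^{s}\le P\hat C^{s}P$ and the trace inequality follows at once. For $s>2$ (equivalently $q\in(1,3/2)$), however, operator convexity of $t\mapsto t^s$ breaks down; this is precisely where the Araki--Lieb--Thirring inequality, which holds for the full range $s\ge 1$, is indispensable to cover the entire range $q\in(1,2]$.
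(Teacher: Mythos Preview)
Your proof is correct but follows a genuinely different route from the paper. The paper simply observes that $\exp_q$ is an ordinary convex function on $[0,\infty)$ for $1\le q\le 2$ (by computing its second derivative) and then invokes Jensen's \emph{trace} inequality of Hansen--Pedersen \cite[Theorem 2.4]{H2003}, which requires only scalar convexity, not operator convexity, and applies directly to the configuration $\sum_j H_j^*B_jH_j$ with $\sum_j H_j^*H_j=I_n$. This gives the result in one line for all $q\in[1,2]$. Your approach instead exploits the special algebraic form of $\exp_q$: the substitution $C_j=I_m+(q-1)B_j$ turns the inequality into a power-function statement $\tr(\hat H^*\hat C\hat H)^s\le\tr\hat H^*\hat C^s\hat H$ for an isometry $\hat H$, which you then settle via Araki--Lieb--Thirring. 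Both arguments are valid; the paper's is shorter and works for any convex $f$ in place of $\exp_q$, while yours avoids citing the trace-Jensen inequality at the price of being tied to the explicit form of $\exp_q$ and requiring the separate reduction to \eqref{sec1_eq01} at $q=1$. Your closing discussion of the ``main obstacle'' is unnecessary once you have decided to use Araki--Lieb--Thirring, since that inequality already covers the full range $s\ge 1$; the operator-convexity detour for $s\in[1,2]$ can be dropped.
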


\begin{proof} We first notice that all the relevant matrices are in the domain of $ \exp_q. $
The function $ \exp_{q}(x) $ is convex in $ x\ge 0 $ for $ 1\le q\le 2, $ since $ \exp_q(x)>0 $ and
\[
\dfrac{d^2}{dx^2}\exp_{q}(x)=\dfrac{d}{dx} \exp_q(x)^{2-q}=(2-q)\exp_q(x)^{1-q}\exp_q(x)^{2-q}\ge 0.
\]
The inequality then follows from Jensen's trace inequality \cite[Theorem 2.4]{H2003}.
\end{proof}

A variational expression for  the reduced Tsallis relative entropy can be obtained with a similar reasoning as in Lemma \ref{theorem01} with some complicated calculations. To this end, we prepare the following lemma.

\begin{lemma}\label{lemma_commute_derivative} 

Consider a  $t\in\mathbb{R}.$ 

\begin{itemize}

\item[(i)]  Let $f:\mathbb{R}\to\mathbb{R}$ be a continuously differentiable function.
For any $n \times n$ Hermitian matrix $A$ and $B$ the derivative
\[
\left.\df{}{t}\tr f(A+tB) \right|_{t=0}=\tr f'(A)B.
\]
\item[(ii)] Let $X,Y$ and $Z$ be $n\times n$ Hermitian matrices such that $XZ=ZX$, and let
$f$ be a continuously differentiable function defined in an open interval containing the eigenvalues of $X$.
Then,
\[
\left. \frac{d}{dt}\tr f\left(X+tY\right)Z \right|_{t=0}=\tr\, Y f'(X)Z.
\]
\end{itemize}
\end{lemma}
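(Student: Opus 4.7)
The plan is to verify both identities first for polynomials and then extend to continuously differentiable $f$ by uniform approximation on a compact interval containing the relevant spectrum.

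For part (i), I would apply the Leibniz rule to $f(x)=x^n$ to obtain
\[
\left.\df{}{t}\tr(A+tB)^n\right|_{t=0}=\sum_{k=0}^{n-1}\tr A^k B A^{n-1-k},
\]
and reduce every summand to $\tr A^{n-1}B$ using cyclicity of the trace; the sum is therefore $n\,\tr A^{n-1}B=\tr f'(A)B$. Linearity extends this to every polynomial $p$, giving the claimed formula with $f$ replaced by $p$ and $f'$ by $p'$.

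For part (ii), the analogous Leibniz computation yields
\[
\left.\df{}{t}\tr(X+tY)^{n}Z\right|_{t=0}=\sum_{k=0}^{n-1}\tr X^k Y X^{n-1-k}Z.
\]
By cyclicity each summand equals $\tr Y X^{n-1-k}ZX^{k}$, and this is the step where the hypothesis $XZ=ZX$ is essential: it permits moving $Z$ past $X^k$, so each summand collapses to $\tr Y X^{n-1}Z$. The full sum is $n\,\tr Y X^{n-1}Z=\tr Y f'(X)Z$, and linearity again extends the identity to all polynomials.

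To pass from polynomials to a general continuously differentiable $f$, I would fix a compact interval $[a,b]$ containing the eigenvalues of $X+tY$ for all $t$ in a neighborhood of $0$ (which is possible by continuity of the spectrum). Using the Weierstrass theorem I would approximate $f'$ uniformly on $[a,b]$ by polynomials $q_n$ and set $p_n(x):=f(a)+\int_a^x q_n(s)\,ds$, so that $p_n\to f$ and $p_n'\to f'$ uniformly on $[a,b]$. Each $p_n$ satisfies the polynomial identity just established, and the main obstacle is justifying the interchange of the limit $n\to\infty$ with the derivative in $t$. I would handle this by diagonalising $X$ and $Z$ simultaneously (which is available because $XZ=ZX$) and writing the derivative via the Daleckii--Krein formula, which expresses $\df{}{t}\tr p_n(X+tY)Z\big|_{t=0}$ as a sum of products of matrix entries of $Y$ and $Z$ in the joint eigenbasis with divided differences $p_n^{[1]}(\lambda_i,\lambda_j)$ of the eigenvalues $\lambda_i$ of $X$. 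Uniform convergence $p_n\to f$ and $p_n'\to f'$ on $[a,b]$ forces $p_n^{[1]}\to f^{[1]}$ uniformly on $[a,b]\times[a,b]$, so passage to the limit gives the desired formula for $f$. The same argument applied in part (i) (with $Z=I$) completes that case as well.
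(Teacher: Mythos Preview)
Your polynomial computations for both parts are correct and give a clean, self-contained argument in that case. The paper proceeds differently: for (i) it simply cites known perturbation formulas, and for (ii) it applies the Daleckii--Krein formula directly to $f$ (not to approximating polynomials), simultaneously diagonalises $X$ and $Z$, and reads off that only the diagonal divided differences $f^{[1]}(\lambda_i,\lambda_i)=f'(\lambda_i)$ survive after multiplying by the diagonal matrix $U^*ZU$ and taking the trace. So the paper uses Daleckii--Krein once, for $f$ itself, whereas you use it for the polynomials $p_n$ and then try to pass to the limit.

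There is a small but genuine gap in your limit step for (ii). Showing that $p_n^{[1]}\to f^{[1]}$ uniformly tells you that the \emph{numbers} $\left.\dfrac{d}{dt}\tr p_n(X+tY)Z\right|_{t=0}$ converge, but it does not by itself show that this limit equals $\left.\dfrac{d}{dt}\tr f(X+tY)Z\right|_{t=0}$; for that you need the derivatives $\dfrac{d}{dt}\tr p_n(X+tY)Z$ to converge \emph{uniformly in $t$} on a neighbourhood of $0$, so that the standard theorem on interchanging limits and differentiation applies. At $t\ne 0$ the matrix $X+tY$ no longer commutes with $Z$, so your elementary polynomial identity is unavailable there, and you would again need Daleckii--Krein at each $t$ to control $g_n'(t)$. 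That fix works, but once you are invoking Daleckii--Krein for general $t$ you might as well invoke it for $f$ at $t=0$ and compute directly as the paper does; the polynomial reduction then buys nothing. (For part (i) this issue does not arise: with $Z=I$ the polynomial identity holds at every $t$, giving $g_n'(t)=\tr p_n'(A+tB)B\to\tr f'(A+tB)B$ uniformly in $t$, so your approximation argument is complete there.)
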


\begin{proof} 
The fact (i) follows by taking the trace on both sides in  \cite[Theorem 3.2]{H1995}, or by \cite[Theorem 3.23]{HP2014}.

Next, we prove (ii). We can take simultaneous diagonalizations $X=U{\rm diag(\lambda_1,\cdots,\lambda_n)}U^*$ and 
$Y=U{\rm diag(\mu_1,\cdots,\mu_n)}U^*$ with a unitary matrix $U$. Thanks to the Daleckii--Krein derivative formula (see for example \cite{bhatia2007}) we obtain
\[
\left. \frac{d}{dt}\tr f\left(X+tY\right) \right|_{t=0}=U\left([f^{[1]}(\lambda_i,\lambda_j)]_{i,j=1}^n \circ (U^*YU)\right)U^*,
\]
where $\circ$ denotes the Schur product and
\[
f^{[1]}(x,y): = \left\{\begin{array}{ll}
\displaystyle\frac{f(x)- f(y)}{x - y}\qquad &x \ne y \\[2.5ex]
f'(x) &x = y. 
\end{array} \right.
\]
We therefore obtain
\begin{align*}
\left.\frac{d}{dt}\tr f(X+tY)Z\right|_{t=0}&=\tr \,\,U\left([f^{[1]}(\lambda_i,\lambda_j)]_{i,j=1}^n \circ (U^*YU)\right)U^*U{\rm diag(\lambda_1,\cdots,\lambda_n)}U^*\\[0.5ex]
&=\tr\,\, {\rm diag(\lambda_1,\cdots,\lambda_n)}\left([f^{[1]}(\lambda_i,\lambda_j)]_{i,j=1}^n \circ (U^*YU)\right)\\[0.5ex]
&=\tr \left[\mu_i f^{[1]}(\lambda_i,\lambda_j)(U^*YU)_{ij}\right]_{i,j=1}^n\\[0.5ex]
&=\sum_{i=1}^n\mu_if'(\lambda_i)(U^*YU)_{ii}\\[0.5ex]
&=\tr\,\, {\rm diag}\left(\mu_1f'(\lambda_1),\cdots, \mu_nf'(\lambda_n)\right)U^*YU\\[0.5ex]
&=U^*f'(X)ZUU^*YU=\tr\,\, Yf'(X)Z
\end{align*}
as desired.
\end{proof}

\begin{theorem}\label{theorem06} 

Let $A,B,X,Y$ be $n\times n$ matrices, and let $H$ be a contraction; take $q\in(1,2]$ and $\gamma>0.$
\begin{itemize}
\item[(i)] If  $Y>0$ and $A=A^*$ with $A+H\log_q(Y)H^*>-\dfrac{1}{q-1}I$, then 
\[
\begin{array}{l}
\displaystyle \gamma \log_{q}\left[\gamma^{-1}\tr\exp_{q}\bigl(A+H \log_{q} (Y)H^*\bigr)\right]+\gamma -\tr\,[Y]\\[2.5ex]
=\max\left\{\tr[X^{2-q}A]-S_{H,\,q}(X\mid Y)\right\},
\end{array}
\]
where the maximum is taken over positive definite $X$ with $\tr\, X=\gamma$.

\item[(ii)] If $X \ge 0$ with $\tr X=\gamma$ and $B=B^*$ with $\log_q X> HBH^*$ and $B>-\dfrac{1}{q-1}I$, then 
\[
\begin{array}{l}
S_{H,q}\left(X\mid \exp_{q} B\right)\\[2ex]
=\displaystyle\max\left\{\tr[X^{2-q}A]-\gamma \log_{q}\left[\gamma^{-1}\tr\exp_{q}\left(A+HBH^*\right)\right] -\gamma+\tr\exp_{q}B\right\},
\end{array}
\]
where the maximum is taken over  positive definite $A$.

\end{itemize}
\end{theorem}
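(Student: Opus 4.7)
The plan is to imitate the argument of Lemma \ref{theorem01}, after turning the Tsallis logarithm into a tractable power via the pointwise identity $X^{2-q}\log_q X=(X-X^{2-q})/(q-1)$. Write $E:=A+H\log_q(Y)H^*$ and $C:=E+\frac{1}{q-1}I$, so that the domain hypothesis in (i) reads $C>0$ and gives $\exp_q(E)^{q-1}=(q-1)C$; from this follows the crucial algebraic identity
\[
\exp_q(E)^{2-q}C=\frac{\exp_q(E)}{q-1}.
\]
The analogous substitution for (ii) will be $A_0:=\log_q X-HBH^*$, which the hypotheses there render positive definite and place in the natural domain of the formula coming from (i).

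For part (i), cyclicity of the trace and the Tsallis identity rewrite the objective as
\[
F(X):=\tr[X^{2-q}A]-S_{H,q}(X\mid Y)=\tr[X^{2-q}C]+\frac{(q-2)\gamma}{q-1}-\tr Y
\]
on the compact convex set $\{X\ge 0:\tr X=\gamma\}$. Since $q\in(1,2]$ gives $2-q\in[0,1)$, the map $X\mapsto X^{2-q}$ is operator concave, and writing $\tr X^{2-q}C=\tr C^{1/2}X^{2-q}C^{1/2}$ makes $F$ concave. I then propose the candidate maximizer
\[
X_0:=\frac{\gamma}{\tr\exp_q(E)}\,\exp_q(E)>0,
\]
which commutes with $C$ and satisfies $X_0^{1-q}C\propto I$ because $X_0$ is, up to a scalar, $C^{1/(q-1)}$. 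Lemma \ref{lemma_commute_derivative}(ii) with $f(x)=x^{2-q}$ then shows that the directional derivative of $F$ at $X_0$ equals a constant multiple of $\tr W$, hence vanishes on every tangent direction $\{\tr W=0\}$; concavity upgrades $X_0$ to the global maximum. Plugging in, $\tr X_0^{2-q}C=\gamma^{2-q}(\tr\exp_q E)^{q-1}/(q-1)$, and the identity $\log_q u=(u^{q-1}-1)/(q-1)$ recasts $F(X_0)$ in the claimed form.

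For part (ii), part (i) exhibits
\[
\Phi(A):=\gamma\log_q[\gamma^{-1}\tr\exp_q(A+HBH^*)]+\gamma-\tr\exp_q B
\]
as the supremum over positive definite $X$ with $\tr X=\gamma$ of $\tr[X^{2-q}A]-S_{H,q}(X\mid\exp_q B)$, so $\Phi$ is convex in Hermitian $A$ and $G(A):=\tr[X^{2-q}A]-\Phi(A)$ is concave. Using Lemma \ref{lemma_commute_derivative}(i) with $f=\exp_q$, together with $\tfrac{d}{du}\log_q u=u^{q-2}$ and the chain rule, the derivative of $G$ at $A$ in a Hermitian direction $S$ is
\[
\tr[X^{2-q}S]-u^{q-2}\tr[\exp_q(A+HBH^*)^{2-q}S],\qquad u:=\gamma^{-1}\tr\exp_q(A+HBH^*).
\]
The candidate $A_0:=\log_q X-HBH^*$, positive definite by hypothesis, satisfies $\exp_q(A_0+HBH^*)=X$, so $u=1$ and the derivative vanishes at $A_0$. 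Substitution together with $\log_q 1=0$ and $\tr X=\gamma$ then collapses $G(A_0)$ to $\tr[X^{2-q}\log_q X-X^{2-q}HBH^*]-\gamma+\tr\exp_q B=S_{H,q}(X\mid\exp_q B)$.

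The most delicate step is extracting the critical-point condition in part (i) when $X$ and $C$ may fail a priori to commute. I sidestep this by guessing $X_0\propto C^{1/(q-1)}$ outright, so that $[X_0,C]=0$ lets the clean formula of Lemma \ref{lemma_commute_derivative}(ii) compute the derivative; otherwise one is forced into the full Daleckii--Krein divided-difference calculus, from which $[X_0,C]=0$ would still emerge via vanishing of the off-diagonal critical-point equations, but only after a noticeably messier computation. The remaining manipulations involving $\log_q$ and $\exp_q$ are purely algebraic.
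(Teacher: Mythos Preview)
Your argument is correct and follows essentially the same route as the paper: propose $X_0\propto\exp_q(E)$ in (i), verify stationarity via Lemma~\ref{lemma_commute_derivative}(ii) using $[X_0,C]=0$, invoke concavity, and evaluate; then in (ii) use (i) to obtain convexity of $\Phi$, set $A_0=\log_q X-HBH^*$, check stationarity via Lemma~\ref{lemma_commute_derivative}(i), and evaluate. The one difference worth noting is your rewriting $F(X)=\tr[X^{2-q}C]+\text{const}$ with $C=E+\tfrac{1}{q-1}I>0$: this makes the concavity of $F$ immediate from operator concavity of $t\mapsto t^{2-q}$ (whereas the paper simply asserts concavity of $F_q$), and it collapses the evaluation of $F(X_0)$ to the identity $\exp_q(E)^{2-q}C=\exp_q(E)/(q-1)$ in place of the $\log_q$ quotient formula~\eqref{formula_logq} the paper uses.
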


\begin{proof} 

To prove (i), we define 
\[
F_{q}(X)=\tr[X^{2-q}A]-S_{H,\,q}(X\mid Y)
\]
for positive definite matrices $ X $ with $\tr\,X=\gamma$. 
The reduced Tsallis relative entropy is written by the form \eqref{alt_exp_rTRE}. Assume $X_k\to X$ for $X_k\ge 0,\,\,(k\in\mathbb{N})$. Then $X_k^{2-q}\to X^{2-q}$ for $q\in(1,2]$. 
Thus $F_{q}(X)$ is continuous in the compact set ${\frak D}_{n,\gamma}$ of all $n \times n$ positive semidefinite matrices with  trace $\gamma$. Therefore, $F_{q}(X)$ takes maximum in a certain $X_0$ with $\tr\, X_0=\gamma$ in ${\frak D}_{n,\gamma}\,.$

We note from the assumptions that
\[
I+(q-1)\left(A+H(\log_q Y)H^*\right)>0
\]
 for $1<q\le 2$,\,\, $A=A^*$ and $Y>0, $ so 
$\exp_{q}\left(A+H\log_{q}(Y)H^*\right)$ is well defined.
Let
\[
X_0=\frac{\gamma\,\, \exp_{q}\left(A+H\log_{q}(Y)H^*\right)}{\tr\left[\exp_{q}\left(A+H\log_{q}(Y)H^*\right)\right]}\,.
\]
For any Hermitian matrix $S$ with $\tr\,S=0$, we have 
\begin{align*}
F_{q}(X_0+tS)=\tr\left[(X_0+tS)^{2-q}\left(A+\frac{1}{q-1}I+H\log_q(Y)H^*\right)\right]-\frac{2-q}{q-1}\tr [X_0+tS]-\tr\, Y.
\end{align*}
Since $X_0$ and $A+\dfrac{1}{q-1}I+H\log_q (Y)H^*$ commute, it follows from Lemma \ref{lemma_commute_derivative} (ii) that
\[
\left.\frac{d}{dt}F_{q}(X_0+tS)  \right|_{t=0}
=\left(2-q\right)\tr\,\, SX_0^{1-q}\left(A+\frac{1}{q-1}I+H\log_{q}(Y)H^*\right)-\frac{2-q}{q-1}\tr\,S.
\]
By setting $c:=\tr\,\,\exp_q\left(A+H\log_q(Y)H^*\right)$, we obtain
\[
\frac{c}{\gamma}X_0=\exp_q\left(A+H\log_q(Y)H^*\right),
\]
that is,
\[
\frac{\left(\frac{c}{\gamma}X_0\right)^{q-1}}{q-1}=A+\frac{1}{q-1}I+H\log_q(T)H^*,
\]
so that
\[
\frac{\left(\frac{c}{\gamma}\right)^{q-1}}{q-1}=X_0^{1-q}\left(A+\frac{1}{q-1}I+H\log_q(T)H^*\right).
\]
Therefore, we have
\[
\left.\frac{d}{dt}F_{q}(X_0+tS)  \right|_{t=0}=0
\]
as $\tr\,S=0$. Since furthermore $F_q(X)$ is concave in the set of positive definite matrices, the function attains maximum in $X_0, $ and since
\begin{equation}\label{formula_logq}
\log_{q}\left(\dfrac{y}{x}\right)=\log_{q} y-\left(\dfrac{y}{x}\right)^{q-1}\log_{q} x
\end{equation}
and by setting $K_{q}=\exp_{q}\left(A+H\log_{q}(Y)H^*\right), $ we obtain
\begin{align*}
&F_{q}(X_0)=\tr[X_0^{2-q}A]-\tr[X_0^{2-q}\log_{q} X_0]+\tr[X_0^{2-q}H\log_{q} (Y)H^*]+\tr\,X_0-\tr\,Y\\[2ex]
&=\gamma^{2-q}\frac{\tr\left[K_{q}^{2-q}\left\{A-\log_{q}\left(\frac{\gamma \,\,K_{q}}{\tr\,K_{q}}\right)+H\log_{q} (Y)H^*\right\}\right]}{(\tr\,K_{q})^{2-q}}+\gamma-\tr\,Y\\[2ex]
&= \gamma^{2-q}\frac{\tr\left[K_{q}^{2-q} \left(\frac{\gamma\,\, K_{q}}{\tr\,K_{q}}\right)^{{q-1}}\log_{q} \left( \frac{1}{\gamma}\,\,\tr\,K_q\right)\right]}{(\tr\,K_{q})^{2-q}}+\gamma-\tr\, Y\quad \text{(by \eqref{formula_logq})}\\[2ex]
&=\gamma \,\log_{q}\left[\gamma^{-1}\,\, \tr\exp_{q}\left(A+H\log_{q} (Y)H^*\right)\right] +\gamma-\tr\,Y.
\end{align*}

Next, we prove (ii).  We note that 
\[
A+HBH^*>HBH^*\ge -\frac{1}{q-1}HH^*\ge -\frac{1}{q-1}I
\]
so that $\exp_q\left(A+HBH^*\right)$ is well defined for $1<q\le 2$.
It  follows by (i) and by using the triangle inequality of $\max$ that the functional 
\[
g_{q}(A)=\gamma \,\log_{q}\left[\gamma^{-1}\,\,\tr\exp_{q}\left(A+HBH^*\right)\right] +\gamma-\tr\exp_{q}B
\]
is convex. 
Then functional $G_{q}(A):=\tr\, X^{2-q}A-g_{q}(A)$ is then concave in the set of all positive definite matrices. 
Let $A_0:=\log_q X-HBH^*$ ($>0$ by assumption). For any Hermitian matrix $S$,
we obtain by Lemma \ref{lemma_commute_derivative} (i), see also equation \eqref{derivative_q-exponential}, that
\begin{align*}
&\left. \frac{d}{dt}G_{q}(A_0+tS)\right|_{t=0} \\
&\quad =\tr\,X^{2-q}S-\gamma\left(\gamma^{-1}\tr\,\exp_q\left(A_0+HBH^*\right)\right)^{q-2}\cdot
\gamma^{-1}\tr\,\left(\exp_q\left(A_0+HBH^*\right)\right)^{2-q}S\\
&\quad =\tr\,X^{2-q}S-\left(\gamma^{-1}\tr \, X\right)^{q-2}\tr\,X^{2-q}S\\
&\quad =\tr \,X^{2-q}S-\tr\,X^{2-q}S=0.
\end{align*}
Thus $G_{q}$ takes maximum in $A_0$ and we obtain
\begin{align*}
& G_{q}(A_0)=\tr[X^{2-q}\left(\log_{q} X-HBH^*\right)]-\gamma\,\log_{q}\left(\gamma^{-1}\,\,\tr X\right)-\gamma+\tr\exp_{q}(B)\\
&=\tr[X^{2-q}\left(\log_{q} X-H\log_{q} (\exp_{q} B)H^*\right)]-\tr\, X +\tr \exp_{q}(B)\\
&= S_{H,\,{q}}\left(X\mid \exp_{q} B\right)
\end{align*}
as minimal value.
\end{proof}

To prove the next proposition, we first recall the parametric extended Golden-Thompson inequality \cite[Proposition 3.2]{FL2010}. 
\begin{lemma}\label{lemma01} 

For $q\in(1,2]$ the inequality
\[
\tr[\exp_q(A+B)] \le \tr[\exp_q(A)\exp_q(B)]
\]
is valid for positive semi-definite matrices. 
\end{lemma}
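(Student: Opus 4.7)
The plan is to rewrite the $q$-exponential in polynomial form and reduce the claim to a trace inequality about positive matrices bounded below by $I$. Setting $p:=1/(q-1)\ge 1$ together with $U:=I+(q-1)A$ and $V:=I+(q-1)B$, the hypothesis $A,B\ge 0$ gives $U,V\ge I$, and one has
\[
\exp_q(A)=U^{p},\quad \exp_q(B)=V^{p},\quad \exp_q(A+B)=(U+V-I)^{p},
\]
so the claimed inequality is equivalent to
\[
\tr(U+V-I)^{p}\le \tr U^{p}V^{p},\qquad U,V\ge I,\ p\ge 1.
\]

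The first step is to invoke the Araki--Lieb--Thirring inequality, which for $p\ge 1$ gives $\tr(U^{1/2}VU^{1/2})^{p}\le \tr U^{p}V^{p}$. It then suffices to establish
\[
\tr(U+V-I)^{p}\le \tr(U^{1/2}VU^{1/2})^{p}.
\]
The base case $p=1$ is elementary: $\tr(U+V-I)=\tr U+\tr V-n$ while $\tr U^{1/2}VU^{1/2}=\tr UV$, so the desired inequality reduces to $\tr(U-I)(V-I)\ge 0$, which holds because $(U-I)^{1/2}(V-I)(U-I)^{1/2}$ is positive semi-definite and has the same trace.

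The main obstacle is the extension from $p=1$ to arbitrary $p>1$. The tempting operator inequality $U+V-I\le U^{1/2}VU^{1/2}$ fails in general: small non-commuting examples with $U,V\ge I$ produce a negative eigenvalue in the difference, so one cannot simply invoke Weyl monotonicity of $t\mapsto t^{p}$ on eigenvalues. The proof given in \cite[Proposition 3.2]{FL2010} circumvents this by working at the level of log-majorization of eigenvalues instead of the operator order, combining an Ando--Hiai type majorization with the fact that weak log-majorization is preserved by the convex increasing map $t\mapsto t^{p}$ on $[1,\infty)$. Since the statement is explicitly recalled from that reference, my plan is to cite it at this final step rather than reproduce the majorization argument in detail.
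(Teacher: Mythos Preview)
The paper does not actually prove this lemma: it is stated as a recalled result from \cite[Proposition 3.2]{FL2010} with no argument given. Your proposal likewise ends by citing \cite{FL2010} for the decisive step, so at the level of what is actually established the two are the same.

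That said, your preliminary reduction is correct and adds useful content that the paper omits. The substitution $p=1/(q-1)\ge 1$, $U=I+(q-1)A$, $V=I+(q-1)B$ correctly turns the inequality into $\tr(U+V-I)^{p}\le \tr U^{p}V^{p}$ for $U,V\ge I$, and the Araki--Lieb--Thirring step and the $p=1$ computation are both valid. You are also right that the operator inequality $U+V-I\le U^{1/2}VU^{1/2}$ fails in general, so the passage to $p>1$ genuinely needs the majorization machinery from \cite{FL2010}. However, as a self-contained proof your write-up is incomplete at exactly the point where the real work lies: the reduction to $\tr(U+V-I)^{p}\le \tr(U^{1/2}VU^{1/2})^{p}$ is not settled, and you defer to the reference there. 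Since the paper itself only cites the reference, this is not a defect relative to the paper, but be aware that your added reduction via ALT is an elaboration rather than an alternative route---the hard inequality you are left with is essentially the content of \cite[Proposition 3.2]{FL2010} itself.
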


\begin{proposition}\label{proposition02} 
Assume $I\le Y \le X$ with $\tr\, X=:\gamma$ and $q\in(1,2]$. Then 
\[
S_{H,\,q}(X\mid Y)\ge \tr\left[H^*X^{2-q}H\log_{q}\bigl(Y^{-1/2}XY^{-1/2}\bigr)\right]-\tr[X-Y]-\gamma\,\log_{q}\bigl(1+\gamma^{-1}\,\,\tr\,[I-HH^*]\bigr).
\]
\end{proposition}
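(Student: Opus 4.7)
The plan is to use Theorem~\ref{theorem06}(ii) as a variational lower bound for $S_{H,q}(X\mid Y)$, and then to control the resulting $\log_q$-term by a successive application of the Jensen-type inequality from Proposition~\ref{theorem05} and the parametric Golden--Thompson inequality from Lemma~\ref{lemma01}. The argument mirrors the proof of Theorem~\ref{lower bound of reduced relative entropy} with $p=1$, except that every step has to be carried out in the $q$-deformed setting.

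Concretely, I take $B:=\log_q Y$ (so that $\exp_q B=Y$) and $A:=H\log_q\bigl(Y^{-1/2}XY^{-1/2}\bigr)H^*$. The hypothesis $I\le Y\le X$ forces $Y^{-1/2}XY^{-1/2}\ge I$, so both $\log_q Y$ and $\log_q(Y^{-1/2}XY^{-1/2})$ are positive semidefinite; in particular $A\ge 0$, $B>-\tfrac{1}{q-1}I$, and $A+HBH^*\ge 0$. Reading Theorem~\ref{theorem06}(ii) as a $\ge$-bound for every admissible $A$ then gives
\[
S_{H,q}(X\mid Y)\ge \tr\bigl[X^{2-q}A\bigr]-\gamma\log_q\!\bigl[\gamma^{-1}\tr\exp_q(A+HBH^*)\bigr]-\gamma+\tr Y.
\]
Cyclicity of the trace turns the first term into $\tr\bigl[H^*X^{2-q}H\log_q(Y^{-1/2}XY^{-1/2})\bigr]$, which is exactly the leading term of the statement, and $-\gamma+\tr Y=-\tr[X-Y]$. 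Only the middle $\log_q$ term remains to be bounded.

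Set $C:=\log_q(Y^{-1/2}XY^{-1/2})+\log_q Y\ge 0$ and apply Proposition~\ref{theorem05} with $k=2$, $H_1:=H^*$, $H_2:=(I-HH^*)^{1/2}$, $B_1:=C$, $B_2:=0$; the constraint $H_1^*H_1+H_2^*H_2=I$ is satisfied and $\exp_q(0)=I$, yielding
\[
\tr\exp_q(HCH^*)\le \tr\bigl[H\exp_q(C)H^*\bigr]+\tr[I-HH^*]\le \tr\exp_q(C)+\tr[I-HH^*],
\]
where the second inequality uses $H^*H\le I$ and $\exp_q(C)\ge 0$. Since both summands in $C$ are positive semidefinite, Lemma~\ref{lemma01} together with $\exp_q\circ\log_q=\mathrm{id}$ and cyclicity gives $\tr\exp_q(C)\le \tr\bigl[Y^{-1/2}XY^{-1/2}\cdot Y\bigr]=\tr X=\gamma$. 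Combining these and invoking monotonicity of $\log_q$ on $(0,\infty)$ for $q>1$ produces
\[
\gamma\log_q\!\bigl[\gamma^{-1}\tr\exp_q(A+HBH^*)\bigr]\le \gamma\log_q\!\bigl(1+\gamma^{-1}\tr[I-HH^*]\bigr),
\]
which, inserted above, delivers the stated inequality.

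The only technical obstacle I anticipate is administrative: Theorem~\ref{theorem06}(ii) is formally stated for \emph{positive definite} $A$ and under $\log_q X>HBH^*$, whereas the chosen $A$ is only positive semidefinite and the operator inequality $\log_q X\ge H\log_q(Y)H^*$ is not transparent from $X\ge Y\ge I$. I expect this to be handled by a short continuity argument, replacing $A$ by $A+\varepsilon I$ (and, if needed, $X$ by $X+\varepsilon I$) and letting $\varepsilon\searrow 0$; every quantity on both sides of the variational inequality is continuous in $A$ throughout the open region $A+HBH^*>-\tfrac{1}{q-1}I$, which is already secured by $HCH^*\ge 0$.
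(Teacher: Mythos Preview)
Your proposal is correct and follows essentially the same route as the paper: the same choice $B=\log_q Y$, $A=H\log_q(Y^{-1/2}XY^{-1/2})H^*$ in Theorem~\ref{theorem06}(ii), followed by the Jensen-type bound of Proposition~\ref{theorem05} with $H_1=H^*$, $H_2=(I-HH^*)^{1/2}$, the contraction estimate $H^*H\le I$, and finally Lemma~\ref{lemma01}. The paper's proof does not address the positivity/strict-inequality hypotheses of Theorem~\ref{theorem06}(ii) that you flag; your $\varepsilon$-perturbation remark is a reasonable way to close that gap.
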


\begin{proof} 

We take $k=2$, $B_1=B,\, B_2=0$, $H_1=H^*$ and $H_2=(I-HH^*)^{1/2}$ in inequality \eqref{ineq_theorem3.2} and obtain
 \begin{equation}\label{ineq_trace_Jenasen_q}
\frac{1}{\gamma}\,\tr\,[\exp_{q}\left(HBH^*\right)]\le \frac{1}{\gamma}\,\tr\,[H\exp_{q}(B)H^*]+\frac{1}{\gamma}\,\tr\,[I-HH^*],\quad 1<q\le 2
\end{equation}
for positive definite $B$ and contractions $H. $
Insert $B=\log_{q} Y$ and $A=H\log_{q}\bigl(Y^{-1/2}XY^{-1/2}\bigr)H^*$ in Theorem \ref{theorem06} (ii). The conditions $Y\ge I$ and $X \ge Y$ assure that $B \ge 0$ and $A \ge 0.$ By the same reasoning as in the proof of Theorem \ref{lower bound of reduced relative entropy} with inequality \eqref{ineq_trace_Jenasen_q} and Lemma \ref{lemma01},  we obtain
\[
\begin{array}{l}
S_{H,\,q}(X\mid Y)+\tr[X-Y] \\[1.5ex]
 \ge\tr\bigl[H^*X^{2-q}H\log_{q}\bigl(Y^{-1/2}XY^{-1/2}\bigr)\bigr]\\[2ex]
 -\gamma\,\log_{q}\bigl(\gamma^{-1}\,\,\tr\bigl[\exp_{q}\bigl(H(\log_{q}(Y^{-1/2}XY^{-1/2})+\log_qY)H^*\bigr)\bigr]\bigr)\\[2ex]
\ge \tr\left[H^*X^{2-q}H\log_{q}\bigl(Y^{-1/2}XY^{-1/2}\bigr)\right]\\[2ex]
-\displaystyle \gamma\,\log_{q} \left(\gamma^{-1}\,\,\tr\left[H\exp_{q}\left(\log_q(Y^{-1/2}XY^{-1/2})+\log_{q}Y\right)H^*\right]+\gamma^{-1}\,\,\tr\,[I-HH^*]\right)\\[2ex]
\ge \tr\left[H^*X^{2-q}H\log_{q}\bigl(Y^{-1/2}XY^{-1/2}\bigr)\right]\\[2ex]
-\displaystyle \gamma\,\log_{q} \left(\gamma^{-1}\,\,\tr\left[\exp_{q}\left(\log_{q}(Y^{-1/2}XY^{-1/2})+\log_{q}Y\right)\right]+\gamma^{-1}\,\,\tr\,[I-HH^*]\right)  \\[2ex]
\ge \tr\left[H^*X^{2-q}H\log_{q}\bigl(Y^{-1/2}XY^{-1/2}\bigr)\right]\\[2ex]
 -\displaystyle \gamma\,\log_{q} \left(\gamma^{-1}\,\,\tr\left[\exp_{q}\left(\log_{q} (Y^{-1/2}XY^{-1/2})\right)\exp_{q}\left(\log_{q} Y\right)\right]+\gamma^{-1}\,\,\tr\,[I-HH^*]\right) \\[2ex]
=\tr\left[H^*X^{2-q}H\log_{q}\bigl(Y^{-1/2}XY^{-1/2}\bigr)\right]-\gamma\,\log_{q}\left(1+\gamma^{-1}\,\,\tr\,[I-HH^*]\right),
\end{array}
\]
where we used $H^*H\le I. $ 
\end{proof}

We close this paper by giving an upper bound of the reduced Tsallis relative entropy.
We make use of the following lemma. 

\begin{lemma}\label{lemma_BPL_FS} 

Let $A$ and $B$ be positive definite matrices. Then
\begin{itemize}
\item[(i)] $\tr \left[A^{1+t} B^t \right]\le \tr \left[A\left(A^{s/2}B^sA^{s/2}\right)^{t/s} \right]$ for $s \ge t > 0$.
\item[(ii)] $\tr \left[A\left(A^{-s/2}B^sA^{-s/2}\right)^{t/s} \right] \le \tr \left[A^{1-t}B^t\right]$ for $s\ge t > 0$ and $0<t\le 1$.
\end{itemize}
\end{lemma}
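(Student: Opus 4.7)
The plan is to derive both inequalities from Araki's log-majorization inequality, supplemented by operator-monotonicity arguments that produce the outer factor of $A$ appearing on each side. Both statements are equalities at $s=t$, so the task is to show that the deformation of $s$ away from $t$ moves the right-hand side in the correct direction. Moreover, when $A$ and $B$ commute, both sides reduce to $\tr\,A^{1\pm t}B^t$ identically, so the inequality is really a statement about non-commutative curvature.

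For part (i), set $r=s/t\ge 1$ and invoke Araki's log-majorization to obtain
\[
\lambda_i(A^{t/2}B^tA^{t/2})\prec_{\log}\lambda_i\bigl((A^{s/2}B^sA^{s/2})^{t/s}\bigr),
\]
from which the \emph{unweighted} trace form $\tr\,A^{t/2}B^tA^{t/2}\le\tr\,(A^{s/2}B^sA^{s/2})^{t/s}$ is immediate. To introduce the outer factor $A$, I would exploit the integral representation of the operator monotone function $x\mapsto x^{t/s}$ valid for $0<t/s\le 1$,
\[
Z^{t/s}=\frac{\sin(\pi t/s)}{\pi}\int_0^\infty\lambda^{t/s-1}\,Z(Z+\lambda I)^{-1}\,d\lambda,
\]
applied with $Z=A^{s/2}B^sA^{s/2}$. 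Under the integral the problem reduces to a family of resolvent-type inequalities parametrized by $\lambda>0$, each of which can be established via Lieb's concavity theorem applied to the functional $(A,B)\mapsto\tr[Y^*A^{1-\alpha}YB^\alpha]$ with $0\le\alpha\le 1$, using that $A$ commutes with itself in the outer factor.

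For part (ii) the argument runs in parallel with $A^{-s/2}$ replacing $A^{s/2}$; the direction of the inequality is reversed due to the negative exponent of $A$, while the hypothesis $0<t\le 1$ ensures that $t/s\in(0,1]$ remains in the operator-monotone regime where the integral representation above is valid. Equivalently, one may deduce (ii) from (i) by the substitution $A\mapsto A^{-1}$ followed by a rearrangement using cyclicity of the trace.

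The principal obstacle is bridging the gap between eigenvalue log-majorization, which is blind to eigenvectors, and the $A$-weighted trace, which depends sensitively on how the eigenvectors of the inner matrix align with those of $A$. Only the structural fact that both sides are built from the same $A$ makes this transfer possible; the resolvent representation together with Lieb's joint concavity is the natural device for effecting it. Without such alignment information, log-majorization alone would be insufficient, as simple $2\times 2$ diagonal examples already demonstrate.
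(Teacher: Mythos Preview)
The paper does not actually prove this lemma: immediately after the statement it records that ``These results were obtained in \cite[Theorem 2.1]{BPL2} and \cite[Theorem 3.1]{FS2021}'' and moves on. So there is no argument in the paper to compare against; your sketch must stand or fall on its own.

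It does not stand. You correctly isolate the difficulty---Araki's log-majorization
\[
A^{t/2}B^tA^{t/2}\;\prec_{\log}\;(A^{s/2}B^sA^{s/2})^{t/s}
\]
controls eigenvalues only, whereas the desired inequalities compare $A$-\emph{weighted} traces---but your proposed bridge does not close the gap. Writing $(A^{s/2}B^sA^{s/2})^{t/s}$ via the integral formula produces $\tr[A\,Z(Z+\lambda I)^{-1}]$ under the integral with $Z=A^{s/2}B^sA^{s/2}$; there is no matching integrand for the left-hand side $\tr[A^{1+t}B^t]$, so a $\lambda$-by-$\lambda$ comparison is not available. Invoking Lieb's concavity at this stage is a non sequitur: concavity yields inequalities between a convex combination and its endpoints, not between two particular matrices related only by log-majorization. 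The proofs in the cited references proceed instead by showing that $s\mapsto\tr\bigl[A(A^{\pm s/2}B^sA^{\pm s/2})^{t/s}\bigr]$ is monotone in $s$, via Furuta-type operator inequalities (or, equivalently, complex-interpolation arguments), which genuinely track how the eigenvectors of the inner block align with the outer factor $A$.

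Your reduction of (ii) to (i) is also incorrect. Replacing $A$ by $A^{-1}$ in (i) yields
\[
\tr[A^{-1-t}B^t]\le\tr\bigl[A^{-1}(A^{-s/2}B^sA^{-s/2})^{t/s}\bigr],
\]
which carries $A^{-1}$ as the outer weight, not $A$; no cyclic rearrangement of the trace converts this into (ii). The extra hypothesis $0<t\le 1$ in (ii), absent from (i), already signals that (ii) is not a substitutional corollary: it enters precisely because the operator-monotone step used for (ii) requires the exponent to lie in the L\"owner--Heinz range.
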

These results were obtained in  \cite[Theorem 2.1]{BPL2} and \cite[Theorem 3.1]{FS2021}. 

\begin{theorem}\label{theorem_upper_bound}  

Let $A$ and $B$ be positive definite matrices, and let $H$ be an invertible contraction. Then
\begin{equation}\label{theorem_upper_bound_eq01}
\begin{aligned}
&S_{H,\,q}(A\mid B)-\frac{1}{q-1}\tr \left[(HH^*-I)A^{2-q}\right]+\tr [A-B]\\
&\le -\tr \left[A\log_{q}\left\{A^{-p/2}\left(HB^{q-1}H^*\right)^{p/(q-1)}A^{-p/2}\right\}^{1/p}\right]
\end{aligned}
\end{equation}
for $q \in [0,2]\backslash{\left\{1\right\}}$ and $p\ge |q-1| >0$.
\end{theorem}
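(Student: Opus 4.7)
The plan is to convert the claimed inequality into a purely algebraic comparison of two traces, and then recognize that comparison as a direct consequence of Lemma \ref{lemma_BPL_FS}, split into the two cases $q>1$ and $q<1$ depending on the sign of $q-1$.

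First I would simplify both sides using the definition of $\log_q$. Starting from the alternative expression \eqref{alt_exp_rTRE} for $S_{H,q}(A\mid B)$, a direct computation gives
\[
S_{H,q}(A\mid B) - \frac{1}{q-1}\tr[(HH^*-I)A^{2-q}] + \tr[A-B] = \frac{1}{q-1}\tr\bigl[A - H^*A^{2-q}HB^{q-1}\bigr],
\]
while the right-hand side of \eqref{theorem_upper_bound_eq01}, writing $C:=\{A^{-p/2}(HB^{q-1}H^*)^{p/(q-1)}A^{-p/2}\}^{1/p}$, expands as
\[
-\tr[A\log_q C] = \frac{1}{q-1}\tr[A-AC^{q-1}].
\]
So after cancelling $\tr A/(q-1)$ and using cyclicity $\tr[H^*A^{2-q}HB^{q-1}]=\tr[A^{2-q}HB^{q-1}H^*]$, the entire theorem is equivalent to
\[
\frac{1}{q-1}\Bigl(\tr[AC^{q-1}] - \tr[A^{2-q}HB^{q-1}H^*]\Bigr)\le 0.
\]

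Next, since $H$ is an invertible contraction, $K:=HB^{q-1}H^*$ is positive definite, so $M:=K^{1/(q-1)}$ is a positive definite matrix (via functional calculus, whichever sign $q-1$ has). Then $K=M^{q-1}$ and $K^{p/(q-1)}=M^p$, so the reduced inequality becomes
\[
\frac{1}{q-1}\Bigl(\tr\bigl[A\{A^{-p/2}M^p A^{-p/2}\}^{(q-1)/p}\bigr] - \tr[A^{2-q}M^{q-1}]\Bigr)\le 0.
\]

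Finally, I would split into two cases. For $q\in(1,2]$, we have $0<q-1\le 1$ and $p\ge q-1>0$, so Lemma \ref{lemma_BPL_FS}(ii) applied with $s=p$, $t=q-1$ and $B=M$ yields exactly $\tr[A\{A^{-p/2}M^p A^{-p/2}\}^{(q-1)/p}]\le \tr[A^{2-q}M^{q-1}]$, which gives the desired bound after multiplying by the positive factor $1/(q-1)$. For $q\in[0,1)$, the factor $1/(q-1)$ is negative, so I need the reversed inequality; rewriting $\{A^{-p/2}M^p A^{-p/2}\}^{(q-1)/p}=(A^{p/2}M^{-p}A^{p/2})^{(1-q)/p}$ via the identity $X^{-\alpha}=(X^{-1})^\alpha$, and then applying Lemma \ref{lemma_BPL_FS}(i) with $s=p$, $t=1-q\in(0,1]$ and $\tilde B=M^{-1}$, produces $\tr[A^{2-q}M^{q-1}]\le \tr[A(A^{p/2}M^{-p}A^{p/2})^{(1-q)/p}]$, which is precisely what is needed.

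The calculation is essentially routine once the reformulation in terms of $M$ is in place; the only subtlety I anticipate is bookkeeping the sign of $q-1$ and the rewriting of the negative-exponent bracket in the $q<1$ case so that Lemma \ref{lemma_BPL_FS}(i) can be applied with positive $s,t$ as required by its hypotheses.
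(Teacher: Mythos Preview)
Your proposal is correct and follows essentially the same route as the paper: both reduce the claim to comparing $\tr[A^{2-q}HB^{q-1}H^*]$ with $\tr[A\{A^{-p/2}(HB^{q-1}H^*)^{p/(q-1)}A^{-p/2}\}^{(q-1)/p}]$, introduce the auxiliary positive definite matrix $(HB^{q-1}H^*)^{1/(q-1)}$, and then invoke Lemma~\ref{lemma_BPL_FS}(ii) for $q\in(1,2]$ and Lemma~\ref{lemma_BPL_FS}(i) (after the inverse rewriting) for $q\in[0,1)$. The only difference is presentational: you first simplify both sides via \eqref{alt_exp_rTRE} and then apply the lemma, whereas the paper establishes the trace inequality first and then substitutes into the expression for $S_{H,q}$.
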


\begin{proof} 

Let $-1\le r <0$. If $p\ge -r >0$, then we have
\begin{align*}
\tr \left[A^{1-r}HB^rH^*\right]&=\tr \left[A^{1-r}\left\{\left(HB^rH^*\right)^{-1/r}\right\}^{-r}\right]\\
&\le \tr \left[A\left\{A^{p/2}\left(HB^rH^*\right)^{-p/r}A^{p/2}\right\}^{-r/p}\right]\quad \text{(by Lemma \ref{lemma_BPL_FS} (i))}\\
&= \tr \left[A\left\{A^{-p/2}\left(HB^rH^*\right)^{p/r}A^{-p/2}\right\}^{r/p}\right].
\end{align*}
Set $r:=q-1$. By use of the above we obtain
\begin{align*}
& S_{H,\,q}(A\mid B)-\frac{1}{q-1}\tr \left[(HH^*-I)A^{2-q}\right]+\tr [A-B] \\
&= -\frac{1}{q-1} \tr \left[A^{2-q}HB^{q-1}H^*-A\right]\\
&\le - \frac{1}{q-1}\tr \left[A\left\{A^{-p/2}\left(HB^{q-1}H^*\right)^{p/(q-1)}A^{-p/2}\right\}^{(q-1)/p}-A\right]\\
&= -\tr \left[A\log_{q}\left\{A^{-p/2}\left(HB^{q-1}H^*\right)^{p/(q-1)}A^{-p/2}\right\}^{1/p}\right]
\end{align*}
for $p\ge 1-q>0$ and $0\le q <1. $ Let next $0<r \le 1$. If $p\ge r >0$ we obtain 
\begin{align*}
\tr \left[A^{1-r}HB^rH^*\right]&= \tr \left[A^{1-r}\left\{\left(HB^rH^*\right)^{1/r}\right\}^{r}\right]\\
&\ge \tr \left[A\left\{A^{-p/2}\left(HB^rH^*\right)^{p/r}A^{-p/2}\right\}^{r/p}\right] \quad \text{(by Lemma \ref{lemma_BPL_FS} (ii))} 
\end{align*}
for $p\ge r >0. $
Next, set $r=q-1. $ By use of the above we obtain
\begin{align*}
& S_{H,\,q}(A\mid B)-\frac{1}{q-1}\tr \left[(HH^*-I)A^{2-q}\right]+\tr [A-B] \\
& = -\frac{1}{q-1} \tr \left[A^{2-q}HB^{q-1}H^*-A\right]\\
& \le -\frac{1}{q-1}\tr \left[A\left\{A^{-p/2}\left(HB^{q-1}H^*\right)^{p/(q-1)}A^{-p/2}\right\}^{(q-1)/p}-A\right] \\
&= -\tr \left[A\log_{q}\left\{A^{-p/2}\left(HB^{q-1}H^*\right)^{p/(q-1)}A^{-p/2}\right\}^{1/p}\right]
\end{align*}
for $p\ge q-1 >0$ and $1< q \le 2. $
\end{proof}


Setting $H=I$ and $\alpha = q-1$ in Theorem \ref{theorem_upper_bound}, inequality \eqref{theorem_upper_bound_eq01} recovers the results established in \cite[Theorem 2.3]{Seo} and \cite[Thereom 4.1]{FS2021} stating that
$$
\tr\left[\frac{A-A^{1-\alpha}B^{\alpha}}{\alpha}\right] \le -\tr \left[A\log_{1+\alpha}\left(A^{-p/2}B^pA^{-p/2}\right)^{1/p}\right]
$$
for $\alpha \in [-1,1]\backslash{\left\{0\right\}}$ and $p\ge |\alpha| >0$.

\subsection*{Declarations}
\begin{itemize}
\item {\bf{Availability of data and materials}}: Not applicable.
\item {\bf{Competing interests}}: The authors declare that they have no competing interests.
\item {\bf{Funding}}: This research is supported by a grant (JSPS KAKENHI, Grant Number: JP21K03341) awarded to the author, S. Furuichi.
\item {\bf{Authors' contributions}}: Authors declare that they have contributed equally to this paper. All authors have read and approved this version.
\end{itemize}


{\tiny (S. Furuichi) Department of Information Science, College of Humanities and Sciences, Nihon University, Setagaya-ku, Tokyo, Japan},

{\tiny Department of Mathematics, 
Saveetha School of Engineering, SIMATS,
Thandalam, Chennai -- 602105,
Tamilnadu, India}

{\tiny \textit{E-mail address:} furuichi.shigeru@nihon-u.ac.jp}

\vskip 0.3 true cm

{\tiny (F. Hansen) Department of Mathematics, University of Copenhagen,	Universitetsparken 5, 2100 Copenhagen, Denmark}

{\tiny \textit{E-mail address:} frank.hansen@math.ku.dk}

\end{document}